\documentclass[11pt,a4paper]{article}
\usepackage[T1]{fontenc}
\usepackage{lmodern}
\usepackage[margin=2.5cm]{geometry}
\usepackage{amsmath,amssymb,amsthm}
\usepackage{microtype}
\usepackage[colorlinks=true,urlcolor=blue!50!black,linkcolor=blue!50!black,citecolor=blue!50!black]{hyperref}
\usepackage{xcolor}
\newtheorem{theorem}{Theorem}[section]

\hypersetup{
  pdftitle={Scalene Yang-Baxter triples realize all nearest-neighbour Hamiltonians with global additive symmetries},
  pdfauthor={Pramod Padmanabhan, Somnath Maity, Akash Sinha, Vladimir Korepin}
}
\begin{document}
\begin{center}
{\LARGE\bfseries
Scalene Yang-Baxter triples realize all\\[3pt]
nearest-neighbour Hamiltonians\\[3pt]
with additive conserved charges\par}
\vspace{1.3em}
{\large
Pramod Padmanabhan\textsuperscript{1},
Somnath Maity\textsuperscript{1},
Akash Sinha\textsuperscript{1}\\[5pt]
and Vladimir Korepin\textsuperscript{2}\par}
\vspace{1em}
{\small
\textsuperscript{1}Department of Physics, School of Basic Sciences,\\
Indian Institute of Technology Bhubaneswar, Bhubaneswar 752050, India\par
\vspace{0.65em}
\textsuperscript{2}C. N. Yang Institute for Theoretical Physics,\\
Stony Brook University, New York 11794, USA\par}
\vspace{1em}
{\footnotesize
\href{mailto:pramod23phys@gmail.com}{\texttt{pramod23phys@gmail.com}}\quad
\href{mailto:somnathmaity126@gmail.com}{\texttt{somnathmaity126@gmail.com}}\\[4pt]
\href{mailto:akash26121999@gmail.com}{\texttt{akash26121999@gmail.com}}\quad
\href{mailto:vladimir.korepin@stonybrook.edu}{\texttt{vladimir.korepin@stonybrook.edu}}\par}
\end{center}
\begin{abstract}
We establish that bond Hamiltonian densities, up to local
difference terms whose sum vanishes under periodic boundary conditions, lying in the commutant of bond operators, $w_j\pm w_{j+1}$, is a necessary and sufficient condition for a one-dimensional nearest-neighbour Hamiltonian to conserve a uniform or staggered sum of the one-site operators $w_j$. Starting from these local commutants, we show that every such Hamiltonian and its
conserved charge can be obtained from monodromy matrices
constructed using regular and non-regular members of scalene
Yang-Baxter triples. The results hold on periodic chains
of length $L\geq3$, with even length required in the staggered
case. The results do not require translation invariance, and hence the interaction on each bond can be chosen independently from the full corresponding commutant. Our representation-independent
construction applies in arbitrary finite local dimension
and exhausts the Hamiltonians with either prescribed additive
symmetry, without requiring ordinary Yang-Baxter integrability.
We illustrate the construction with spin-$\tfrac12$ and spin-1
systems and discuss its relation to the double-commutant
theorem. Classifying the one-site generators $w$ and their
associated bond commutants provides a route to classifying
such Hamiltonians with additive conserved charges.
\end{abstract}

\section{Introduction}
\label{sec:introduction}
Hamiltonians with global symmetries provide a natural setting for exploring collective phenomena in one-dimensional quantum spin chains. Among these symmetries, those generated by additive charges impose direct constraints on the allowed interactions and the resulting dynamics. For Hermitian generators, they prevent dynamical mixing between distinct charge sectors and impose selection rules on matrix elements of observables, restricting the transitions that can contribute to physical response functions~\cite{MoudgalyaMotrunich2023}. Conservation of total magnetization, for example, gives a local continuity equation which then implies that a localized excess of magnetization must redistribute through spin currents, making the conservation law a starting point for understanding spin transport and the relaxation of magnetization profiles~\cite{BertiniTransport}. 

Typically such Hamiltonians may not have too many conservation laws other than the stated additive charges. Apart from some exceptions, they are also not expected to be quantum integrable and more particularly, they may not be obtainable from a solution of the Yang-Baxter equation. Thus it is important to develop a systematic method to construct such Hamiltonians from some first principles. In this work we lay such a foundation, by showing that the {\it scalene Yang-Baxter equation} provides such a framework.

We will consider nearest-neighbour spin-chain Hamiltonians.  Translational invariance is not a requirement for our results. We only work on closed one-dimensional lattices with the Hamiltonians obeying periodic boundary conditions. All the Hamiltonians we consider have additive conserved charges that are either uniform or staggered. The Hamiltonian and the corresponding charges take the forms
\begin{equation}
 H=\sum_{j=1}^{L}h_{j,j+1},\qquad
 Q_{+}=\sum_{j=1}^{L}w_j,\qquad
 Q_{-}=\sum_{j=1}^{L}(-1)^j w_j,
 \label{eq:intro-charges}
\end{equation}
where $w$ is a fixed one-site operator. All indices obey modulo $L$ arithmetic. As declared earlier, the densities $h_{j,j+1}$ may vary independently from bond to bond; translation invariance is not assumed. In particular, different bonds may carry completely different interaction structures. This is stronger than including a site-dependent coupling of one fixed Hamiltonian density to break the translational invariance. We take $L\geq3$ in both cases, with an additional condition of even $L$ when considering $Q_{-}$. When $w$ is Hermitian, these charges generate continuous unitary symmetries. Our construction also allows more general $w$, treating symmetry through the conservation equations $[H,Q_{\pm}]=0$. The problem we address is to realize the Hamiltonians obeying these equations, together with their conserved charges, within a common algebraic framework.

A natural starting point to formulate such a construction is the quantum inverse scattering method. The latter provides a rigorous methodology to construct Hamiltonians and conserved quantities from the ordinary Yang-Baxter equation~\cite{Faddeev,KBI,Slavnov}. In this framework, a regular $R$-matrix gives a commuting family of transfer matrices, whose logarithmic derivatives generate the Hamiltonian and further conserved charges. This construction, however, imposes stronger conditions than conservation of $Q_{+}$ or $Q_{-}$ alone. To encompass {\it all} Hamiltonians with either additive symmetry, we therefore seek a Yang-Baxter-type relation that retains an algebraic connection between the Hamiltonian and its charge without requiring ordinary Yang-Baxter integrability. The scalene Yang-Baxter equation provides such a relation,
\begin{equation}
 A_{12}B_{13}C_{23}=C_{23}B_{13}A_{12},
 \label{eq:intro-scalene}
\end{equation}
whose solutions are ordered triples of generally distinct operators. Relations of this form were studied by Hietarinta and Viallet in their analysis of low-dimensional solutions and their parametrizations~\cite{HietarintaViallet}. An algebraic approach based on multiparameter quantum affine spaces was subsequently developed to construct a large class of scalene solutions~\cite{QuantumAffineSpaces}. An explicit spin-chain construction also demonstrated how a scalene triple can encode conserved operators beyond the ordinary commuting-transfer-matrix framework~\cite{PadmanabhanMaityKorepin}. Here we give a general, representation-independent construction of the scalene triples $(A,B,C)$, that capture both types of additive charges in \eqref{eq:intro-charges}, valid in arbitrary finite local dimension.

Our construction begins with a single site operator $w$ and is organized around the commutants of the local bond operators $w_j\pm w_{j+1}$. Given this we show that there exists at least one scalene triple, such that the braided counterpart of the $B$ operator in the scalene triple satisfies   
\begin{equation*}
 \mathfrak C_{\pm}(w)
 =\bigl\{\check B_{j,j+1}:[\check B_{j,j+1},w_j\pm w_{j+1}]=0\bigr\}.
\end{equation*}
By enforcing regularity for the $B$ operator, the desired Hamiltonian density $h_{j,j+1}$, obtained from the monodromy matrix constructed using $B$, also satisfies the above bond commutant relation. Furthermore, membership of the Hamiltonian density in $\mathfrak C_{\pm}(w)$ is sufficient to show the conservation of $Q_{\pm}$ for the Hamiltonian in \eqref{eq:intro-charges}. Then we show that it is also a necessary condition up to an appropriate change of density,
\begin{equation*}
 h_{j,j+1}\longmapsto h_{j,j+1}+t_j-t_{j+1},
\end{equation*}
where the one-site operators $t_j$ may again depend on $j$. This leaves the periodic Hamiltonian unchanged. Each density in the corresponding local commutant admits a scalene realization. While the Hamiltonian is obtained from the regular member of the triple, the additive charge is encoded in the monodromy constructed from a non-regular member. Thus the construction exhausts the Hamiltonians with the specified symmetry, modulo the difference terms, $t_j-t_{j+1}$, whose sum vanishes under periodic boundary conditions. Henceforth we refer to these difference operators as {\it telescoping terms}. The results obtained in this work hold for all choices of spin representation and is true for both invertible and non-invertible $w$ operators.

Bond and commutant algebras already provide a systematic language for symmetries of local Hamiltonians, including conventional symmetries and Hilbert-space fragmentation~\cite{MoudgalyaMotrunich2022,MoudgalyaMotrunich2023}. The double-commutant theorem also provides a route to constructing operators with prescribed symmetry~\cite{MoudgalyaMotrunich2023}. In this setting our contribution is the explicit scalene Yang-Baxter realization of the full local commutants relevant to uniform and staggered additive charges, together with the global-to-local correspondence modulo telescoping terms. This realizes the Hamiltonian and its symmetry within one ternary relation, without requiring the transfer matrices of the regular member to commute among themselves. For a fixed generator, one pair $A,C$ realizes the entire local commutant, so arbitrary symmetry-preserving changes of the couplings, including spatial disorder, can be incorporated through site-dependent $B$ operators alone. The conserved additive charges organize the Hilbert space into invariant sectors, providing a starting point for a coordinate Bethe ansatz~\cite{Bethe1931,Gaudin}, as illustrated by the fixed-magnetization sectors of the $XXX$ and $XXZ$ chains~\cite{KBI}. In this sense, the scalene construction supplies an algebraic foundation for such an approach; its extension to a complete solution of the spectrum depends on further properties of the interactions.

The paper is organized as follows. Section~\ref{sec:local-scalene} establishes the local sufficiency conditions and presents the representation-independent scalene realizations and the extraction of the additive charges. Section~\ref{sec:necessity} proves the converse from global conservation to a locally commuting density modulo telescoping terms. Section~\ref{sec:examples} gives explicit spin-$\tfrac12$ and spin-1 examples. Section~\ref{sec:discussion-outlook} summarizes the results and discusses double commutants, Haag duality and product symmetries. Section~\ref{sec:outlook} outlines the classification program that can exhaust all such Hamiltonians. Appendix~\ref{app:product-bonds} gives a scalene construction for product-bond commutants.

\section{Local conservation laws and scalene realizations}
\label{sec:local-scalene}
Let $w$ be a one-site operator on a finite-dimensional local Hilbert space $\mathcal{H}$\footnote{While the results obtained here do not depend on the choice of $\mathcal{H}$, we will mostly take this to be the $d$-dimensional complex space $\mathbb{C}^d$, as is standard for quantum spin chains.}. We assume the total Hilbert space is a tensor product of these local Hilbert spaces and so operators with disjoint supports commute. Allow the two-site densities $h_{j,j+1}$ to depend on the bond. Then the local relations
\begin{equation}
 [h_{j,j+1},w_j+w_{j+1}]=0
 \quad\hbox{or}\quad
 [h_{j,j+1},w_j-w_{j+1}]=0
 \label{eq:local-sufficiency}
\end{equation}
imply the following global conservation laws on a periodic chain,
\begin{align}
 [H,Q_+]&=\sum_{j=1}^{L}[h_{j,j+1},w_j+w_{j+1}]=0,
 \label{eq:uniform-sufficiency}\\
 [H,Q_-]&=\sum_{j=1}^{L}(-1)^j[h_{j,j+1},w_j-w_{j+1}]=0.
 \label{eq:staggered-sufficiency}
\end{align}
The second identity requires even $L$. This restriction can be understood directly from the closing bond, $(L,1)$, of an odd periodic chain. Every bulk bond $(j,j+1)$ with $j<L$ contributes $(-1)^j[h_{j,j+1},w_j-w_{j+1}]=0$. For odd $L$, however, sites $L$ and $1$ both enter $Q_-$ with a minus sign, so the closing bond instead gives
\begin{equation}
 [H,Q_-]=-[h_{L,1},w_L+w_1]
 =-2[h_{L,1},w_1],
 \qquad L\ \hbox{odd}.
 \label{eq:odd-boundary-defect}
\end{equation}
The second equality uses $[h_{L,1},w_L-w_1]=0$. The failure of conservation is supported entirely on the closing bond. On an odd periodic chain with difference-bond commuting densities, $Q_-$ is therefore conserved precisely when the closing density also commutes with the bond sum. Equivalently, $h_{L,1}$ must commute with $w_L$ and $w_1$ separately. This additional restriction concerns only the Hamiltonian density on the closing bond. However, if we demand translation invariance as well then this is imposed  on the bulk density.

For an open chain, the closing bond is absent, and the same local difference relation conserves $Q_-$ for both even and odd $L$. There are ways to restore these symmetries for the odd case.  For instance, on an odd-sized ring, conservation can also be restored by choosing a closing density that commutes with $w_L+w_1$, while keeping the difference-bond commuting bulk densities. If there exists an invertible one-site operator $g$ satisfying $gwg^{-1}=-w$, such a closing bond density is supplied by
\begin{equation}
 H_g=\sum_{j=1}^{L-1}h_{j,j+1}+g_L^{-1}h_{L,1}g_L,
 \qquad [H_g,Q_-]=0.
 \label{eq:twisted-Hamiltonian}
\end{equation}
Indeed, the transformed closing bond commutes with $w_L+w_1$, because conjugation by $g_L$ sends this sum to $-w_L+w_1$. The existence of $g$ is an additional condition on $w$. For example when $w=S^z$, a rotation by $\pi$ about the $x$-axis supplies a unitary choice and preserves Hermiticity of the Hamiltonian. This construction generally changes the closing bond and establishes conservation for $H_g$, rather than for the original periodic Hamiltonian $H$. We can obtain such $H_g$'s in the QISM framework by using a twisted auxiliary trace. 

Apart from these special instances neither translation invariance of the densities nor additional assumptions on $w$ are needed. This is important for exhausting all the Hamiltonians with such global additive symmetries. The relations in \eqref{eq:local-sufficiency} establish sufficiency for the additive conservation laws. The converse, modulo telescoping changes of density, is proved in Section~\ref{sec:necessity}.

To realize these Hamiltonians within a Yang-Baxter-type framework, we will derive the local bond-commutant relations \eqref{eq:local-sufficiency} from a local ternary relation \eqref{eq:intro-scalene}. We use the non-braided scalene equation~\eqref{eq:intro-scalene} satisfied by the ordered triple $(A,B,C)$ and choose its middle member to be regular. Then we have 
\begin{equation*}
 B_{j,j+1}(u)=P_{j,j+1}\check B_{j,j+1}(u),\qquad
 \check B_{j,j+1}(0)=I,
\end{equation*}
where $\check B$ is the braided counterpart of $B$. Here $P_{j,j+1}$ exchanges the two local Hilbert spaces and is thus the standard permutation operator acting on finite-dimensional spaces. The corresponding density is
\begin{equation*}
 h_{j,j+1}
 =P_{j,j+1}\left.\frac{\partial B_{j,j+1}(u)}{\partial u}\right|_{u=0}
 =\left.\frac{\partial\check B_{j,j+1}(u)}{\partial u}\right|_{u=0}.
\end{equation*}
Consequently, a differentiable family satisfying
\begin{equation}
 [\check B_{j,j+1}(u),w_j\pm w_{j+1}]=0
 \label{eq:braided-bond-condition}
\end{equation}
produces a density obeying the corresponding relation in \eqref{eq:local-sufficiency}. Conversely, any such density can be embedded into the regular family $\check B_{j,j+1}(u)=I+u h_{j,j+1}$. We will now construct $A$ and $C$ operators of the triple such that the scalene equation is equivalent to \eqref{eq:braided-bond-condition}. These constructions ensure that $\check B_{j,j+1}$ lies in the commutant of $w_j\pm w_{j+1}$. The desired Hamiltonian densities will be obtained from the appropriate monodromy matrices constructed using the scalene triple. We postpone the assignment of auxiliary and physical spaces until that construction; for now, the indices in the scalene equation label only its three tensor factors. Spectral parameters are suppressed in the two theorems, since the equivalences hold separately at every parameter value. Each theorem applies independently on every bond, with the same $A,C$ and an independently chosen $B$. 

\begin{theorem}[Scalene realization of the sum-bond commutant]
\label{thm:sum-bond}
Let $w$ be an arbitrary one-site operator, and let $s$ be a nonzero one-site operator satisfying $s^2=0$. For any fixed $\lambda\in\mathbb C\setminus\{0\}$, define
\begin{equation*}
 A_{12}=I+\lambda w_1s_2,\qquad
 C_{23}=I-\lambda s_2w_3,\qquad
 B_{13}=P_{13}\check B_{13}.
\end{equation*}
Then $A_{12}$ and $C_{23}$ are invertible, and
\begin{equation*}
 A_{12}B_{13}C_{23}=C_{23}B_{13}A_{12}
 \quad\Longleftrightarrow\quad
 [\check B_{13},w_1+w_3]=0.
\end{equation*}
For every allowed choice of $s$ and $\lambda$, the possible braided middle members are exactly the sum-bond commutant. The realization of a given $B$ by $A$ and $C$ need not be unique.
\end{theorem}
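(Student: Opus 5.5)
Invertibility comes first. Since $s^2=0$, the operator $N=\lambda w_1s_2$ satisfies $N^2=\lambda^2 w_1^2 s_2^2=0$, because $w_1$ and $s_2$ act on different factors. Hence $A_{12}^{-1}=I-\lambda w_1s_2$. In the same way $C_{23}^{-1}=I+\lambda s_2w_3$.

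Next I will move the permutation through. I write $B_{13}=P_{13}\check B_{13}$ and use $P_{13}X_{12}=X_{32}P_{13}$ for any operator $X$. This turns the scalene equation into
\begin{equation*}
 P_{13}(I+\lambda w_3s_2)\check B_{13}(I-\lambda s_2w_3)
 =(I-\lambda s_2w_1)P_{13}\check B_{13}(I+\lambda w_1s_2)
 =P_{13}(I-\lambda s_2w_3)\check B_{13}(I+\lambda w_1s_2).
\end{equation*}
Cancelling $P_{13}$ and expanding, the zeroth-order terms cancel. What remains is
\begin{equation*}
 \lambda\,s_2\bigl(w_3\check B_{13}-\check B_{13}w_3+w_3\check B_{13}-\check B_{13}w_1\bigr)+\lambda^2 s_2^2(\cdots)=0.
\end{equation*}
I need to be careful here with the operator orderings. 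The terms of order one in $s_2$ collect as $\lambda s_2\otimes X_{13}$, where
\begin{equation*}
 X_{13}=w_3\check B_{13}-\check B_{13}w_3+w_3\check B_{13}+\cdots
\end{equation*}
must be recomputed carefully. The right way is to take the $\check B_{13}$-part from each side and match.

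Written out, the left side is $\check B+\lambda s_2(w_3\check B-\check B w_3)-\lambda^2 s_2^2\, w_3\check Bw_3$. The right side is $\check B+\lambda s_2(\check B w_1-w_3\check B)-\lambda^2 s_2^2\, w_3\check B w_1$. The $\lambda^2$ terms vanish because $s^2=0$. The equation therefore reduces to
\begin{equation*}
 \lambda\, s_2\otimes\bigl(2w_3\check B_{13}-\check B_{13}w_3-\check B_{13}w_1\bigr)=0.
\end{equation*}
That expression is not yet a commutator, which signals a sign or ordering slip, so the main obstacle is this bookkeeping. I will redo the conjugation $P_{13}C_{23}P_{13}=C_{21}$ with the correct factor order. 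I expect the correct reduction to be $\lambda s_2\otimes[\check B_{13},w_1+w_3]=0$: the $w_3$ contributions coming from $A$ and from $C$ should appear with opposite placement, and the two together should produce the commutator with the bond sum. If the direct ordering fails, I will instead bring $C$ to the other side using its inverse, which gives the conjugation form
\begin{equation*}
 C^{-1}_{23}A_{12}B_{13}=B_{13}A_{12}C^{-1}_{23},
\end{equation*}
and expand there.

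Once the reduction is $\lambda s_2\otimes Y_{13}=0$ with $Y=[\check B_{13},w_1+w_3]$, the equivalence follows. Because $\lambda\neq0$ and $s\neq0$, the elementary tensor $s_2\otimes Y_{13}$ vanishes exactly when $Y=0$; I will justify this by taking a linear functional $\phi$ with $\phi(s)\ne0$ and applying it on factor $2$. The converse direction is the same computation read backwards. The closing remarks of the theorem then follow directly. The set of admissible $\check B$ is exactly the sum-bond commutant, independently of the choice of $s$ and $\lambda$, and non-uniqueness is shown by noting that different choices of $s$ or $\lambda$ realize the same $B$.
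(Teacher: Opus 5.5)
\textbf{The key reduction is not established.} Your route is the paper's: invert $A$ and $C$ by nilpotency, move $P_{13}$ to the left, expand to first order in $s_2$, and strip off $\lambda s_2$. But the one substantive step never actually goes through. That step is reducing $A_{12}B_{13}C_{23}-C_{23}B_{13}A_{12}$ to a commutator with $w_1+w_3$. Your chain writes the right-hand side as $(I-\lambda s_2w_1)P_{13}\check B_{13}(I+\lambda w_1s_2)$. However, the factor standing to the left of $P_{13}$ is $C_{23}=I-\lambda s_2w_3$, not $I-\lambda s_2w_1$. Your next move, $(I-\lambda s_2w_1)P_{13}=P_{13}(I-\lambda s_2w_3)$, is itself a correct identity. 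So this single mislabeling is what produces the non-commutator $2w_3\check B_{13}-\check B_{13}w_3-\check B_{13}w_1$. You suspect a slip and name the right identity $P_{13}C_{23}P_{13}=C_{21}$, but you then only state that you ``expect'' the commutator. As written, the equivalence rests on an unperformed computation, and that computation is the entire content of the theorem.

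\textbf{How to close it.} Using $C_{23}P_{13}=P_{13}C_{21}$, together with $s_2^2=0$ and the fact that $s_2$ commutes with everything on factors $1$ and $3$,
\begin{equation*}
 C_{23}B_{13}A_{12}=P_{13}(I-\lambda s_2w_1)\check B_{13}(I+\lambda w_1s_2)
 =P_{13}\bigl(\check B_{13}+\lambda s_2(\check B_{13}w_1-w_1\check B_{13})\bigr).
\end{equation*}
Subtract this from your correct expansion $A_{12}B_{13}C_{23}=P_{13}\bigl(\check B_{13}+\lambda s_2(w_3\check B_{13}-\check B_{13}w_3)\bigr)$. This gives
\begin{equation*}
 A_{12}B_{13}C_{23}-C_{23}B_{13}A_{12}=\lambda P_{13}s_2[w_1+w_3,\check B_{13}],
\end{equation*}
which is exactly the paper's identity. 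No detour through $C_{23}^{-1}$ is needed.

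\textbf{Remaining steps.} From here, your argument with a functional $\phi$ satisfying $\phi(s)\neq0$, applied on factor $2$, is a clean way to make rigorous the paper's ``separate tensor factor'' step. One small point on non-uniqueness: varying $s$ or $\lambda$ changes $A$ and $C$ only when $w\neq0$. If $w=0$, then $A=C=I$ for every choice. The overall rescalings $A\mapsto c_AA$ and $C\mapsto c_CC$ noted in the paper work in general.
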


\begin{proof}
Using $s_2^2=0$, the inverses are $A_{12}^{-1}=I-\lambda w_1s_2$ and $C_{23}^{-1}=I+\lambda s_2w_3$. Substituting the algebraic expressions for $A$, $B$ and $C$ into \eqref{eq:intro-scalene} and moving the permutation to the left gives
\begin{align*}
 A_{12}B_{13}C_{23}
 &=P_{13}(I+\lambda w_3s_2)\check B_{13}(I-\lambda s_2w_3),\\
 C_{23}B_{13}A_{12}
 &=P_{13}(I-\lambda s_2w_1)\check B_{13}(I+\lambda w_1s_2).
\end{align*}
Since $s_2$ commutes with every operator on spaces $1$ and $3$, subtracting these expressions yields
\begin{equation*}
 A_{12}B_{13}C_{23}-C_{23}B_{13}A_{12}
 =\lambda P_{13}s_2[w_1+w_3,\check B_{13}].
\end{equation*}
The permutation is invertible and $\lambda\ne0$. Moreover, $s_2\ne0$ acts on a separate factor, so the right-hand side vanishes precisely when the commutator on factors $1$ and $3$ vanishes. 
\end{proof}

For every $h_{j,j+1}\in\mathfrak C_+(w)$, taking $\check B_{j,j+1}(u)=I+u h_{j,j+1}$ therefore gives a regular scalene middle member with the fixed $A$ and $C$ of Theorem~\ref{thm:sum-bond}. Its density conserves $Q_+$ by \eqref{eq:uniform-sufficiency}. We now formulate the scalene triple for the difference bond operator $w_j - w_{j+1}$. In this case we introduce another one-site grading operator $r$ together with the nilpotent operator $s$. Their anticommutation supplies the relative sign between the two sites in the bond.

\begin{theorem}[Scalene realization of the difference-bond commutant]
\label{thm:difference-bond}
Let $w$ be an arbitrary one-site operator, and let one-site operators $r,s$ obey
\begin{equation}
 r^2=I,\qquad s^2=0,\qquad rs=-sr,\qquad s\ne0.
 \label{eq:auxiliary-grading}
\end{equation}
For any fixed $\lambda\in\mathbb C\setminus\{0\}$, define
\begin{equation*}
 A_{12}=r_2+\lambda w_1s_2,\qquad
 C_{23}=r_2+\lambda s_2w_3,\qquad
 B_{13}=P_{13}\check B_{13}.
\end{equation*}
Then $A_{12}^2=C_{23}^2=I$, and
\begin{equation*}
 A_{12}B_{13}C_{23}=C_{23}B_{13}A_{12}
 \quad\Longleftrightarrow\quad
 [\check B_{13},w_1-w_3]=0.
\end{equation*}
For every allowed choice of $r,s$ and $\lambda$, the possible braided middle members lie exactly in the difference-bond commutant. The realization of a given $B$ by $A$ and $C$ need not be unique.
\end{theorem}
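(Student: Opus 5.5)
I will follow the same route as the proof of Theorem~\ref{thm:sum-bond}: a direct substitution, then conjugating the permutation through, then reducing the scalene equation to a single commutator on factors $1$ and $3$ multiplied by a nonvanishing operator on factor $2$.

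The first step is the involution property. Since $w_1$ commutes with $r_2,s_2$, we get
\begin{equation*}
A_{12}^2=r_2^2+\lambda w_1(r_2s_2+s_2r_2)+\lambda^2w_1^2s_2^2=I,
\end{equation*}
using \eqref{eq:auxiliary-grading}. The computation for $C_{23}$ is identical.

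For the equivalence, I write $B_{13}=P_{13}\check B_{13}$ and move $P_{13}$ to the left in both products. Conjugation by $P_{13}$ exchanges the labels $1$ and $3$ and leaves factor $2$ untouched, so $A_{12}P_{13}=P_{13}(r_2+\lambda w_3s_2)$ and $C_{23}P_{13}=P_{13}(r_2+\lambda s_2w_1)$. This gives
\begin{align*}
A_{12}B_{13}C_{23}&=P_{13}(r_2+\lambda w_3s_2)\check B_{13}(r_2+\lambda s_2w_3),\\
C_{23}B_{13}A_{12}&=P_{13}(r_2+\lambda s_2w_1)\check B_{13}(r_2+\lambda w_1s_2).
\end{align*}
I then expand both products, using that $r_2$ and $s_2$ commute with $\check B_{13}$ and with $w_1,w_3$. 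The $r_2^2$ terms give $\check B_{13}$ on both sides, and the $\lambda^2$ terms contain $s_2^2=0$. The linear terms in the first product are $\lambda(w_3\check B_{13}s_2r_2+\check B_{13}w_3r_2s_2)$. Writing $s_2r_2=-r_2s_2$, this becomes $\lambda r_2s_2(\check B_{13}w_3-w_3\check B_{13})$. The analogous terms in the second product give $\lambda r_2s_2(w_1\check B_{13}-\check B_{13}w_1)$. The difference of the two sides is therefore
\begin{equation*}
A_{12}B_{13}C_{23}-C_{23}B_{13}A_{12}=-\lambda P_{13}\,r_2s_2\,[\check B_{13},w_1-w_3].
\end{equation*}
The main thing to check carefully here is the sign bookkeeping. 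The anticommutation $rs=-sr$ is exactly what converts the sum structure of Theorem~\ref{thm:sum-bond} into a difference.

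To finish, $P_{13}$ is invertible and $\lambda\neq0$. Also $r_2s_2\neq0$, because $r$ is invertible ($r^2=I$) and $s\neq0$. An operator of the form $X_2\otimes Y_{13}$ with $X\neq0$ vanishes if and only if $Y=0$. Hence the scalene equation holds if and only if $[\check B_{13},w_1-w_3]=0$. The claims that the allowed braided middle members are exactly the difference-bond commutant, and that the realization need not be unique, then follow as in Theorem~\ref{thm:sum-bond}: the characterization is independent of the choice of $r,s,\lambda$, so different admissible choices realize the same $B$.
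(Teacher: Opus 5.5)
Your route is the paper's: move $P_{13}$ to the left, expand, drop the $\lambda^2$ terms using $s^2=0$, apply $rs=-sr$ to the linear terms, and conclude by the tensor-factor argument. Your final identity $-\lambda P_{13}r_2s_2[\check B_{13},w_1-w_3]$ also agrees with the paper's $\lambda P_{13}(rs)_2[w_1-w_3,\check B_{13}]$. However, the step that is supposed to produce it has a sign error, so as written the final line does not follow. In the second product $(r_2+\lambda s_2w_1)\check B_{13}(r_2+\lambda w_1s_2)$ the two linear terms are
\begin{equation*}
 \lambda r_2\check B_{13}w_1s_2=\lambda r_2s_2\check B_{13}w_1,\qquad
 \lambda s_2w_1\check B_{13}r_2=\lambda s_2r_2w_1\check B_{13}=-\lambda r_2s_2w_1\check B_{13}.
\end{equation*}
They therefore sum to $\lambda r_2s_2(\check B_{13}w_1-w_1\check B_{13})$, not $\lambda r_2s_2(w_1\check B_{13}-\check B_{13}w_1)$.

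With the sign you wrote, subtracting from the first product's $\lambda r_2s_2(\check B_{13}w_3-w_3\check B_{13})$ would give $\lambda P_{13}r_2s_2[\check B_{13},w_1+w_3]$, i.e.\ the sum-bond condition of Theorem~\ref{thm:sum-bond}. The whole content of this statement is that relative sign, so the slip is not cosmetic, even though it is easy to repair. With the corrected sign, both products have linear part $\lambda r_2s_2[\check B_{13},w_k]$, and their difference is exactly your final formula.

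The paper avoids this bookkeeping. Because $s_2$ commutes with $w_k$, one has $A_{12}P_{13}=P_{13}C_{23}$ and $C_{23}P_{13}=P_{13}A_{12}$. Both sides are then sandwiches $X_k\check B_{13}X_k$ with the same $X_k=r_2+\lambda s_2w_k$ (with $k=3$ and $k=1$ respectively). A single identity, $X_k\check B_{13}X_k=\check B_{13}+\lambda(rs)_2[\check B_{13},w_k]$, covers both products, and the minus sign between $w_1$ and $w_3$ comes only from subtracting the two instances.

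The rest of your proposal is fine: the involution check, $r_2s_2\ne0$ from invertibility of $r$, the tensor-factor conclusion, and the non-uniqueness remark.
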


\begin{proof}
The identities $A_{12}^2=C_{23}^2=I$ follow immediately from \eqref{eq:auxiliary-grading}. The permutation exchanges factors $1$ and $3$, so
\begin{equation*}
 A_{12}B_{13}C_{23}-C_{23}B_{13}A_{12}
 =P_{13}\bigl(C_{23}\check B_{13}C_{23}
             -A_{12}\check B_{13}A_{12}\bigr).
\end{equation*}
For $k=1,3$, the relations on $r,s$ give
\begin{equation*}
 (r_2+\lambda s_2w_k)\check B_{13}(r_2+\lambda s_2w_k)
 =\check B_{13}+\lambda (rs)_2[\check B_{13},w_k].
\end{equation*}
Subtracting the two instances of this identity yields
\begin{equation*}
 A_{12}B_{13}C_{23}-C_{23}B_{13}A_{12}
 =P_{13}\lambda (rs)_2[w_1-w_3,\check B_{13}].
\end{equation*}
Since $r$ is invertible and $s\ne0$, one has $rs\ne0$. Since $(rs)_2$ acts on factor $2$ and the commutator acts on factors $1$ and $3$, the tensor-product structure and $\lambda\ne0$ again proves the desired commutation relation.
\end{proof}

Every $h_{j,j+1}\in\mathfrak C_-(w)$ is therefore realized by $\check B_{j,j+1}(u)=I+u h_{j,j+1}$ with the fixed $A$ and $C$ of Theorem~\ref{thm:difference-bond}. The resulting periodic Hamiltonian conserves $Q_-$ for even $L$.

Both proofs use only the displayed algebraic relations and impose no polynomial identity, diagonalizability or Hermiticity condition on $w$. The required one-site operators, $r$ and $s$, exist in every dimension $d\geq2$: in terms of generators of $\rm Mat(\mathbb{C},d)$,  $E_{mn}=|m\rangle\langle n|$, we can take $s=E_{12}$ and $r=I-2E_{22}$. The triples can therefore be realized on three copies of $\mathcal{H}$ in any nontrivial finite local dimension. The case $d=1$ is scalar and satisfies the scalene equation trivially.

The freedom in $A$ and $C$ is distinct from the freedom in the Hamiltonian density. In both theorems, $\lambda$ is a fixed nonzero scalar, independent of the spectral parameter $u$, and can be absorbed into $s$. Retaining it explicitly displays a normalization freedom in the charge-carrying part of the triple. At $\lambda=0$, the scalene relation imposes no bond-commutant condition on $B$, and the non-regular monodromy loses its charge term. We therefore exclude this value when claiming equivalence with the full local commutant.

We can try other ans\"atze for $A$ and $C$ instead of the ones shown in these Theorems. A simple way to do this is by changing the scalar factors into the independent coefficients $\lambda_A$ and $\lambda_C$. Within the normalized ansatz of Theorem~\ref{thm:sum-bond}, take $A_{12}=I+\lambda_Aw_1s_2$ and $C_{23}=I-\lambda_Cs_2w_3$. At the regular point $B_{13}(0)=P_{13}$, direct multiplication gives
\begin{equation}
 A_{12}B_{13}(0)C_{23}-C_{23}B_{13}(0)A_{12}
 =(\lambda_A-\lambda_C)P_{13}s_2(w_3-w_1).
 \label{eq:unequal-sum-normalizations}
\end{equation}
Likewise, for $A_{12}=r_2+\lambda_Aw_1s_2$ and $C_{23}=r_2+\lambda_Cs_2w_3$, the difference construction gives
\begin{equation}
 A_{12}B_{13}(0)C_{23}-C_{23}B_{13}(0)A_{12}
 =(\lambda_C-\lambda_A)P_{13}(rs)_2(w_1+w_3).
 \label{eq:unequal-difference-normalizations}
\end{equation}
For non-scalar $w$, neither $w_3-w_1$ nor $w_1+w_3$ is zero, so the separate tensor factors force $\lambda_A=\lambda_C$ in both cases. This necessity already follows at the regular point and hence applies to every regular family used here, including all the spin examples. For completeness, the sum ansatz permits unequal coefficients if $w$ is scalar; the difference ansatz permits them only if $w=0$. These exceptional choices give only trivial additive charges. The conclusion concerns these fixed ansatz forms; it is not a uniqueness theorem for all scalene solutions with a given $B$.

There are still other choices available for $A$ and $C$. Any admissible $s$ in the sum construction, or pair $r,s$ in the difference construction, works with the same $B$. Moreover, for any invertible one-site operator $Y$, the auxiliary-space change of basis
\begin{equation*}
 A_{12}\longmapsto Y_2A_{12}Y_2^{-1},\qquad
 C_{23}\longmapsto Y_2C_{23}Y_2^{-1}
\end{equation*}
preserves the scalene relation while leaving $B_{13}$ unchanged, because $Y_2$ commutes with $B_{13}$. Independent overall rescalings $A\mapsto c_AA$ and $C\mapsto c_CC$, with nonzero scalars $c_A,c_C$, also preserve the relation: both sides acquire the same factor $c_Ac_C$. Such overall factors need not agree, unlike the relative nilpotent coefficients in \eqref{eq:unequal-sum-normalizations} and \eqref{eq:unequal-difference-normalizations}. These freedoms need not give inequivalent triples, and a particular $B$ can admit additional realizations outside our ansatz. Our exhaustion statement concerns Hamiltonian densities modulo telescoping terms, not a classification of all triples realizing them.

\subsection*{Monodromy matrices and global conservation}
We will now adapt QISM~\cite{Faddeev,KBI,Slavnov} to the scalene triples constructed earlier. In the local scalene equation, let $1=a$ and $2=b$ be auxiliary spaces and let $3=j$ be a physical site. Thus $A_{ab}$ intertwines the two auxiliary spaces, while $B_{aj}(u)$ and $C_{bj}$ couple them to the physical sites. We define the monodromy matrices corresponding to the $B$ and $C$ operators as
\begin{equation*}
 T^{(B)}_a(u)=B_{aL}(u)\cdots B_{a1}(u),\qquad
 T^{(C)}_b=C_{bL}\cdots C_{b1}.
\end{equation*}
Then the corresponding transfer matrices, obtained by tracing over the auxiliary spaces, are defined as $\tau_B(u)=\operatorname{tr}_a T^{(B)}_a(u)$ and $\tau_C=\operatorname{tr}_b T^{(C)}_b$. In each case, the $\check B_{aj}(u)$ operators are taken independently from the corresponding bond commutant, i.e. $B_{aj}(u)$ need not be translated copies of a single operator family. This freedom is not restricted to scalar rescalings of one interaction or to spectral shifts $B_{aj}(u)=B(u-\theta_j)$ of a common family. Instead, distinct operator families can all be regular at the same point $u=0$. The same $A_{ab}$ and $C_{bj}$ are used throughout the chain. Repeated application of the local scalene relation gives the $RTT$-like identity
\begin{equation}
 A_{ab}T^{(B)}_a(u)T^{(C)}_b
 =T^{(C)}_bT^{(B)}_a(u)A_{ab}.
 \label{eq:mixed-RTT}
\end{equation}
To see this, commute factors on disjoint auxiliary and physical sites to arrange the product as $(B_{aL}C_{bL})\cdots(B_{a1}C_{b1})$, and move $A_{ab}$ through these pairs one site at a time. This argument assumes no ordinary Yang-Baxter equation for $B$. 

We will now simplify the $C$-monodromy matrices of our scalene triples and extract the additive conserved charges. For the sum-bond construction, $C_{bj}=I-\lambda s_bw_j$. Nilpotency makes every term containing two or more factors of $s_b$ vanish, giving a compact expression for the associated monodromy matrix on an arbitrary number of sites $L$,
\begin{equation}
 T^{(C)}_b
 =\prod_{j=L}^{1}(I-\lambda s_bw_j)
 =I-\lambda s_b\sum_{j=1}^{L}w_j
 =I-\lambda s_bQ_+.
 \label{eq:uniform-monodromy}
\end{equation}
The uniform charge is therefore the coefficient of $-\lambda s_b$ in the monodromy. Since $s_b$ is nilpotent, $\operatorname{tr}_b s_b=0$, and hence
\begin{equation*}
 \tau_C=d_b I,\qquad d_b=\dim\mathcal{H}_b.
\end{equation*}
Thus taking the auxiliary trace removes the charge from this monodromy and produces a trivial transfer matrix.

A similar result holds for the difference-bond construction. In this case we have $C_{bj}=r_b+\lambda s_bw_j$. Using $r_b^2=I$ and $r_bs_b=-s_br_b$, rewrite each factor as $r_b(I+\lambda (rs)_bw_j)$. Moving all powers of $r_b$ to the left gives
\begin{align}
 T^{(C)}_b
 &=r_b^L\prod_{j=L}^{1}\bigl(I+(-1)^{j-1}\lambda (rs)_bw_j\bigr)\notag\\
 &=r_b^L\bigl(I-\lambda (rs)_bQ_-\bigr).
 \label{eq:staggered-monodromy-general}
\end{align}
Here $(rs)_b^2=0$, so once again only terms linear in the charge survive. On an even periodic chain this becomes
\begin{equation}
 T^{(C)}_b=I-\lambda (rs)_bQ_-,\qquad \tau_C=d_b I,
 \label{eq:staggered-monodromy}
\end{equation}
where the trace identity follows from the nilpotency of $rs$. The staggered charge is encoded in the coefficient of $-\lambda(rs)_b$, although its transfer matrix is proportional to the identity. The corresponding odd-length $RTT$ calculation is given below. In both cases the relevant coefficient matrix is nonzero and linearly independent of $I$, so the charge can be recovered directly from the monodromy after a trace on the auxiliary matrix space.

These scalar transfer matrices make the traced cross-commutativity relation $[\tau_B(u),\tau_C]=0$ uninformative. The untraced relation~\eqref{eq:mixed-RTT}, however, retains the additive charge and yields its conservation. For the uniform construction, substitute $A_{ab}=I+\lambda w_as_b$ and \eqref{eq:uniform-monodromy} into \eqref{eq:mixed-RTT}. Expanding and using $s_b^2=0$ gives
\begin{equation*}
 \lambda s_b\bigl([w_a,T^{(B)}_a(u)]+[Q_+,T^{(B)}_a(u)]\bigr)=0.
\end{equation*}
The expression in parentheses acts independently of space $b$. Since $\lambda\ne0$ and $s_b\ne0$, this proves
\begin{equation}
 [T^{(B)}_a(u),w_a+Q_+]=0,
 \qquad [\tau_B(u),Q_+]=0.
 \label{eq:uniform-RTT-charge}
\end{equation}
The second identity follows by tracing over $a$, for which the trace of the commutator with $w_a$ vanishes. This computation is independent of the auxiliary space $b$.

For the staggered construction on an even chain, substitute $A_{ab}=r_b+\lambda w_as_b$ and \eqref{eq:staggered-monodromy} instead. The two sides of \eqref{eq:mixed-RTT} reduce to
\begin{align*}
 A_{ab}T^{(B)}_aT^{(C)}_b
 &=r_bT^{(B)}_a+\lambda s_b\bigl(w_aT^{(B)}_a-T^{(B)}_aQ_-\bigr),\\
 T^{(C)}_bT^{(B)}_aA_{ab}
 &=r_bT^{(B)}_a+\lambda s_b\bigl(T^{(B)}_aw_a+Q_-T^{(B)}_a\bigr).
\end{align*}
Since $\lambda\ne0$ and $s_b\ne0$ acts on a separate factor, we obtain
\begin{equation}
 [w_a,T^{(B)}_a(u)]=\{T^{(B)}_a(u),Q_-\},
 \qquad \{\tau_B(u),Q_-\}=0,
 \label{eq:staggered-RTT-charge}
\end{equation}
where the second equality follows again from a trace over the auxiliary space $a$. It is natural to expect an anticommutator in this case as the regular transfer matrix reduces to the one-site translation operator at $u=0$, which reverses the staggered charge.

To extract the Hamiltonian, use regularity $B_{aj}(0)=P_{aj}$ at every site. Then $\tau_B(0)=U$, the one-site cyclic shift on the closed chain. To keep track of bond dependence, write $\check B_{aj}(u)=P_{aj}B_{aj}(u)$ and let $\eta^{(j)}$ denote its derivative at zero as an operator on $\mathcal H\otimes\mathcal H$. The permutation calculation gives
\begin{equation}
 H:=U^{-1}\tau'_B(0)=\sum_{j=1}^L\eta^{(j)}_{j-1,j}.
 \label{eq:inhomogeneous-derivative}
\end{equation}
Here $\eta^{(j)}_{j-1,j}$ denotes the copy of $\eta^{(j)}$ on the ordered physical bond $(j-1,j)$. Thus a prescribed density $h_{j,j+1}$ is assigned to the factor $B_{a,j+1}$: choose $\eta^{(j+1)}_{j,j+1}=h_{j,j+1}$, with cyclic indices. The linear regular families constructed above realize any such assignment. This identity uses regularity alone, as in the standard permutation argument~\cite{Faddeev,KBI,Slavnov}; it requires neither translation invariance nor self-commutativity of the transfer family. In particular, $\tau_B(0)=U$ does not imply $[H,U]=0$.

In the uniform case, \eqref{eq:uniform-RTT-charge} implies that both $U$ and $\tau'_B(0)$ commute with $Q_+$, so $[H,Q_+]=0$. In the staggered case, \eqref{eq:staggered-RTT-charge} implies
\begin{equation*}
 UQ_-=-Q_-U,\qquad \tau'_B(0)Q_-=-Q_-\tau'_B(0).
\end{equation*}
The two minus signs cancel in the logarithmic derivative:
\begin{equation*}
 HQ_-=U^{-1}\tau'_B(0)Q_-
 =-U^{-1}Q_-\tau'_B(0)
 =Q_-U^{-1}\tau'_B(0)=Q_-H.
\end{equation*}
Thus the $RTT$-like relation recovers $[H,Q_{\pm}]=0$ in both cases, even though the corresponding $C$-transfer matrices are scalar. The nontrivial symmetry information resides in the $C$-monodromies themselves.

For odd $L$, the boundary obstruction discussed earlier also appears in the $RTT$ calculation. The monodromy identity~\eqref{eq:staggered-monodromy-general} remains valid for odd $L$, but now $r_b^L=r_b$. Hence
\begin{equation}
 T^{(C)}_b=r_b-\lambda s_bQ_-,\qquad
 \tau_C=(\operatorname{tr}_b r_b)I.
 \label{eq:odd-staggered-monodromy}
\end{equation}
The alternating operator $Q_-$ is still present in the monodromy, and the transfer matrix remains scalar. However, for a traceless $r$ it vanishes. Its occurrence as a monodromy coefficient, however, does not by itself establish conservation on an odd periodic chain. Indeed, inserting \eqref{eq:odd-staggered-monodromy} into the same $RTT$-like relation gives
\begin{align*}
 A_{ab}T^{(B)}_aT^{(C)}_b
 &=T^{(B)}_a-\lambda (rs)_b\bigl(w_aT^{(B)}_a+T^{(B)}_aQ_-\bigr),\\
 T^{(C)}_bT^{(B)}_aA_{ab}
 &=T^{(B)}_a+\lambda (rs)_b\bigl(T^{(B)}_aw_a+Q_-T^{(B)}_a\bigr).
\end{align*}
Since $\lambda\ne0$ and $rs\ne0$, their equality implies
\begin{equation}
 \{T^{(B)}_a(u),w_a+Q_-\}=0,
 \qquad
 \{\tau_B(u),Q_-\}=-2\operatorname{tr}_a\bigl(w_aT^{(B)}_a(u)\bigr).
 \label{eq:odd-RTT-charge}
\end{equation}
Thus the traced identity contains an insertion of $w_a$, in place of the anticommutation relation used for even $L$. At the regular point, the one-site shift obeys
\begin{equation*}
 UQ_-U^{-1}=-Q_- - 2w_1,
\end{equation*}
for our monodromy ordering. The extra term records the mismatch of alternating signs on the last bond.

The boundary-twisted Hamiltonian~\eqref{eq:twisted-Hamiltonian} has a corresponding monodromy construction. For the same operator $g$, define
\begin{equation*}
 \tau_{B,g}(u)=\operatorname{tr}_a\bigl(g_aT^{(B)}_a(u)\bigr).
\end{equation*}
Multiplying the odd-length $RTT$ identity~\eqref{eq:odd-RTT-charge} by $g_a$ and tracing over $a$ gives
\begin{equation*}
 \{\tau_{B,g}(u),Q_-\}
 =-\operatorname{tr}_a\bigl(\{g_a,w_a\}T^{(B)}_a(u)\bigr)=0.
\end{equation*}
Since $\tau_{B,g}(0)=g_1U$ is invertible, the two anticommutations cancel in the logarithmic derivative, yielding
\begin{equation*}
 \tau_{B,g}(0)^{-1}\tau'_{B,g}(0)=H_g,
 \qquad [H_g,Q_-]=0.
\end{equation*}
This twisted trace produces the modified closing bond specified above. A weighted trace of $T^{(C)}_b$ may instead extract $Q_-$, but extraction alone cannot remove the obstruction~\eqref{eq:odd-boundary-defect} for the original periodic Hamiltonian.

\paragraph{Related monodromy constructions :}
Symmetry generators also occur in untraced monodromy matrices in established QISM constructions. For the periodic $XXX$ chain, Faddeev explicitly identifies the total-spin operators in the large-spectral-parameter expansion of the monodromy~\cite[(45) and (98)]{Faddeev}. With the normalization $L_{aj}(u)=uI+i\sum_{\alpha=x,y,z}\sigma_a^\alpha S_j^\alpha$, this reads
\begin{equation*}
 T_a(u)=u^L I+i u^{L-1}\sum_{\alpha=x,y,z}\sigma_a^\alpha S_{\mathrm{tot}}^\alpha
 +O(u^{L-2}),\qquad
 S_{\mathrm{tot}}^\alpha=\sum_{j=1}^{L}S_j^\alpha.
\end{equation*}
The displayed spin-dependent term vanishes under the auxiliary trace because the Pauli matrices are traceless. Nevertheless, the $RTT$ relation implies $[T_a(u),S_{\mathrm{tot}}^\alpha+\sigma_a^\alpha/2]=0$, and hence conservation of $S_{\mathrm{tot}}^\alpha$ by the periodic Hamiltonian. Thus even in this standard example, the individual spin generators are read from monodromy coefficients rather than from the corresponding coefficient of the transfer matrix.

Another example is the one-dimensional Hubbard model. Uglov and Korepin established its Yangian symmetry on the infinite chain~\cite{UglovKorepin}. Murakami and G\"ohmann subsequently obtained the Yangian generators from the asymptotic expansion of a submatrix of the monodromy in an infinite-interval, zero-density QISM construction~\cite[Sec.~3]{MurakamiGohmann}; the quantum determinant of that submatrix generates Yangian-invariant commuting operators, including the Hamiltonian. The infinite-interval setting is essential to this comparison and should not be identified with a generic finite periodic Yangian symmetry. These examples place the extraction of symmetry generators from monodromy entries in an established context. The construction here provides an explicit representation-independent scalene realization for every Hamiltonian in the specified additive-symmetry classes, with $Q_\pm$ retained in a non-regular monodromy whose auxiliary trace is scalar.

\paragraph{Comparison with the Reshetikhin condition :}
For a homogeneous density obtained from an analytic, regular difference-form solution of the ordinary Yang-Baxter equation, the density $h=\check R'(0)$, with $\check R(0)=I$, necessarily satisfies~\cite{KulishSklyanin1982, JimboMiwa1984, GrabowskiMathieu1995, Bibikov2000}
\begin{equation}
 \mathcal D(h):=[h_{12}+h_{23},[h_{12},h_{23}]]
 =K_{23}-K_{12}
 \label{eq:reshetikhin-condition}
\end{equation}
for some two-site operator $K$. This cubic constraint arises at third order in the spectral expansion. Our scalene equation instead reduces exactly to the linear bond-commutant condition and imposes no such constraint. An explicit algebraic counterexample is the spin-1 density
\begin{equation*}
 h_{j,j+1}=S_j^xS_{j+1}^x+S_j^yS_{j+1}^y
 =\tfrac12(S_j^+S_{j+1}^-+S_j^-S_{j+1}^+).
\end{equation*}
It commutes with $S_j^z+S_{j+1}^z$ and therefore has the scalene realization of Theorem~\ref{thm:sum-bond}. Nevertheless, in the three-site product basis of $S^z$ eigenstates, direct multiplication gives
\begin{equation*}
 \langle1,1,-1|\mathcal D(h)|0,0,1\rangle=1,
 \qquad
 \langle1,1,-1|(K_{23}-K_{12})|0,0,1\rangle=0
 \quad\text{for every }K.
\end{equation*}
The second equality follows because the states differ on both outer sites. Thus \eqref{eq:reshetikhin-condition} fails despite the local symmetry. The obstruction concerns this density representative; it neither tests all telescoping representatives nor excludes more general spectral dependence. Conversely, integrable members of our families can satisfy the condition. The scalene construction therefore accommodates both possibilities without imposing the ordinary Yang-Baxter constraint.

We close this section with a remark on the ordered nature of the scalene triple. The roles of its regular and non-regular members can be exchanged by a corresponding transformation of the intertwiner. Since $A$ is invertible, interchanging spaces $1$ and $2$ in the scalene equation and multiplying by $A_{21}^{-1}$ on both sides gives another ordered solution,
\begin{equation*}
 \widetilde A_{12}=A_{21}^{-1},\qquad
 \widetilde B_{13}=C_{13},\qquad
 \widetilde C_{23}(u)=B_{23}(u),
\end{equation*}
where $A_{21}=P_{12}A_{12}P_{12}$. Thus exchanging $B$ and $C$ requires changing $A$ as well. For the two constructions above, $A_{21}^{-1}=C_{12}$, so the transformed triples take the simple form
\begin{equation*}
 \widetilde A_{12}=M_{12},\qquad
 \widetilde B_{13}=M_{13},\qquad
 \widetilde C_{23}(u)=P_{23}\check B_{23}(u),
 \end{equation*}
 where
 \begin{equation*}
 M_{12}=\begin{cases}
 I-\lambda s_1w_2,&\text{sum bond},\\
 r_1+\lambda s_1w_2,&\text{difference bond}.
 \end{cases}
\end{equation*}
Now $\widetilde C(0)=P$ is regular and produces the same Hamiltonian density, $P\widetilde C'(0)=\check B'(0)=h$. The non-regular member $\widetilde B$ has precisely the monodromy previously computed for $C$, with the auxiliary label changed from $b$ to $a$, and therefore contains $Q_+$ or $Q_-$. The Hamiltonian is consequently generated by $\tau_{\widetilde C}$, while the additive operator is retained in $T^{(\widetilde B)}$. All chain-length and boundary qualifications established above continue to apply.


\section{From global symmetry to local bond commutants}
\label{sec:necessity}
The constructions of Section~\ref{sec:local-scalene} realize every Hamiltonian density in the sum- or difference-bond commutant. To show that they exhaust all nearest-neighbour Hamiltonians, including those with spatially varying interactions, with the corresponding global conserved charge, we must prove the converse statement. The latter states that a nearest-neighbor Hamiltonian with either of the global conservation laws is made up of Hamiltonian densities that lies in the appropriate bond commutant. Such Hamiltonian densities are only defined up to local difference terms $t_j-t_{j+1}$, whose sum vanishes over a periodic chain. For a generic Hamiltonian density, the converse need not hold. In fact the obstruction to their validities are precisely attributed to the local densities of the form $t_j-t_{j+1}$. A simple example illustrates this obstruction. Consider the density
$h_{j,j+1}=\sigma_j^x-\sigma_{j+1}^x$ and take
$w_j=\sigma_j^z$. Its periodic sum reduces the Hamiltonian to 0,  $H=0$,
so both global charges $Q_\pm$ are conserved. Nevertheless,
the local commutators are
\[
 [h_{j,j+1},w_j+w_{j+1}]
 =-2i(\sigma_j^y-\sigma_{j+1}^y),\qquad
 [h_{j,j+1},w_j-w_{j+1}]
 =-2i(\sigma_j^y+\sigma_{j+1}^y),
\]
which are nonzero. More generally, adding such local difference
terms to any locally commuting density leaves the periodic
Hamiltonian unchanged but can destroy its local commutant
relation. Thus global conservation can imply the local relation
only after removing an appropriate telescoping change of density.
We will now prove that this telescoping freedom is the only obstruction.

We work on the full tensor-product Hilbert space of a periodic chain with local dimension $d<\infty$. The densities may depend arbitrarily on the bond; in particular, we impose no reflection symmetry, or any other integrability condition. We take $L\geq3$ for the uniform charge and even $L\geq4$ for the staggered charge. The conservation law is taken to be an operator identity on the full Hilbert space, rather than a relation restricted to a selected charge sector.

We know that every two-site operator admits a unique decomposition in this setup. In particular, a generic Hamiltonian density is given by
\begin{equation}
 h_{j,j+1}=c_jI+\alpha_j+\beta_{j+1}+\kappa_{j,j+1},\qquad
 \operatorname{tr}\alpha_j=\operatorname{tr}\beta_j=0,\qquad
 \operatorname{tr}_{j}\kappa_{j,j+1}
 =\operatorname{tr}_{j+1}\kappa_{j,j+1}=0.
 \label{eq:density-decomposition}
\end{equation}
Here $c_j$ is a site-dependent scalar and $\kappa_{j,j+1}$ contains the part with traceless factors on both sites. The one-site operator $\alpha_j$ is the left contribution of bond $(j,j+1)$, whereas $\beta_j$ is the right contribution of bond $(j-1,j)$. Both act on site $j$, and neither is assumed to be a translate of a fixed operator. It is important to note that this is a decomposition of an arbitrary density, and not an ansatz. More explicitly we have,
\begin{equation*}
 c_j=\frac{\operatorname{tr}h_{j,j+1}}{d^2},\qquad
 \alpha_j=\frac{1}{d}\operatorname{tr}_{j+1}h_{j,j+1}-c_jI_j,\qquad
 \beta_{j+1}=\frac{1}{d}\operatorname{tr}_{j}h_{j,j+1}-c_jI_{j+1},
\end{equation*}
and $\kappa$ is the remainder in \eqref{eq:density-decomposition}. By periodicity, the Hamiltonian corresponding to \eqref{eq:density-decomposition} becomes
\begin{equation}
 H=L\bar cI+\sum_{j=1}^{L}v_j+\sum_{j=1}^{L}\kappa_{j,j+1},
 \qquad v_j=\alpha_j+\beta_j,\quad \bar c=L^{-1}\sum_jc_j.
 \label{eq:canonical-global-density}
\end{equation}
As an illustration, let $\mathcal{H}=\mathbb{C}^2$. Then an example of a spin-$\tfrac12$ density is given by,
\[
 h_{j,j+1}=p_j\sigma_j^z+q_j\sigma_{j+1}^z
 +J_{x,j}\sigma_j^x\sigma_{j+1}^x
 +J_{y,j}\sigma_j^y\sigma_{j+1}^y
 +J_{z,j}\sigma_j^z\sigma_{j+1}^z.
\]
Here $p_j,q_j,J_{x,j},J_{y,j},J_{z,j}$ are arbitrary complex, bond-dependent coefficients. Since the Pauli matrices are traceless, $c_j=0$, $\alpha_j=p_j\sigma_j^z$, $\beta_{j+1}=q_j\sigma_{j+1}^z$, and $\kappa_{j,j+1}$ consists of the three exchange terms. Each exchange term has zero partial trace over either site. In the periodic sum the field terms combine into $\sum_j(p_j+q_{j-1})\sigma_j^z$: the physical field $v_j$ receives contributions from the two adjacent bonds. The operators appearing in this Hamiltonian density are linearly independent. This independence can be seen directly from a familiar operator basis. For two spins of local dimension two, the sixteen Pauli products
\[
 I,\qquad \sigma_j^\mu,\qquad \sigma_{j+1}^\nu,\qquad
 \sigma_j^\mu\sigma_{j+1}^\nu,
 \qquad \mu,\nu\in\{x,y,z\},
\]
form a basis of the two-site operator space. There is one identity, three one-site operators on each site, and nine products with a traceless factor on both sites. These groups are mutually orthogonal under the Hilbert-Schmidt inner product, so a linear combination of the field terms cannot cancel a linear combination of the interaction terms. The example above uses a subset of this basis, whereas an arbitrary two-spin density may contain all sixteen components.

The same reasoning holds in an arbitrary local dimension $d$. Choose traceless one-site operators $F^\mu$, $\mu=1,\ldots,d^2-1$, such that $I,F^1,\ldots,F^{d^2-1}$ form a basis of $\operatorname{End}(\mathbb{C}^d)$. Then
\[
 I,\qquad F_j^\mu,\qquad F_{j+1}^\nu,\qquad
 F_j^\mu F_{j+1}^\nu,
 \qquad \mu,\nu=1,\ldots,d^2-1,
\]
form a basis of $\operatorname{End}(\mathcal{H}_j\otimes\mathcal{H}_{j+1})$, with
$1+2(d^2-1)+(d^2-1)^2=d^4$ elements. The four groups span precisely the scalar, left one-site, right one-site and two-site components with vanishing partial traces in \eqref{eq:density-decomposition}. Including the identity is essential for a basis of the entire two-site operator space. This also makes clear that the interaction component $\kappa$ excludes terms that merely act as the identity on one of its sites.

On the full chain, products of these one-site basis elements form a basis of the global operator space. Traceless one-site terms have a nonidentity factor at one site, whereas the terms in $\kappa_{j,j+1}$ have nonidentity factors at both sites of their bond. They therefore involve different basis elements and cannot cancel. For $L\geq3$, distinct nearest-neighbour bonds also have distinct sets of two sites, so their zero-partial-trace contributions are independent even when the bonds share a site. Thus the identity, one-site, and nearest-neighbour two-site terms span the subspace relevant to \eqref{eq:canonical-global-density}. In physical terms they correspond to the constant energy shift, magnetic field and interaction terms respectively.

\begin{theorem}[Necessity, modulo telescoping terms]
\label{thm:global-local}
Let $w$ be any one-site operator and let $H=\sum_jh_{j,j+1}$ be a nearest-neighbour Hamiltonian with arbitrary bond-dependent densities. The following equivalences assert the existence of a periodic collection of one-site operators $t_j$. For a uniform global additive operator $Q_+$ on a periodic chain of length $L\geq3$,
\begin{equation*}
 [H,Q_+]=0
 \quad\Longleftrightarrow\quad
 [\widehat h_{j,j+1},w_j+w_{j+1}]=0
 \quad\text{for every }j,\quad
 \widehat h_{j,j+1}=h_{j,j+1}+t_j-t_{j+1}.
\end{equation*}
For a staggered global additive operator $Q_-$ on an even periodic chain of length $L\geq4$,
\begin{equation*}
 [H,Q_-]=0
 \quad\Longleftrightarrow\quad
 [\widehat h_{j,j+1},w_j-w_{j+1}]=0
 \quad\text{for every }j,\quad
 \widehat h_{j,j+1}=h_{j,j+1}+t_j-t_{j+1}.
\end{equation*}
Here the $t_j$ are one-site operators that may vary with $j$. In either case, one may choose the canonical representative
\begin{equation}
 \widehat h_{j,j+1}
 =\bar cI+\kappa_{j,j+1}+\frac12(v_j+v_{j+1}),
 \label{eq:canonical-representative}
\end{equation}
with components defined in \eqref{eq:density-decomposition}-\eqref{eq:canonical-global-density}. A corresponding correction is
\begin{equation*}
 t_j=\tfrac12(\beta_j-\alpha_j)+\rho_jI_j,\qquad
 \rho_1=0,\qquad \rho_{j+1}-\rho_j=c_j-\bar c.
\end{equation*}
The scalar sequence is periodic because $\sum_j(c_j-\bar c)=0$. No Hermiticity or diagonalizability assumption on $w$ is required. If every density is Hermitian, these $t_j$ and $\widehat h_{j,j+1}$ are Hermitian as well.
\end{theorem}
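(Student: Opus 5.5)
The reverse implications are immediate, and the forward ones reduce to a sector-by-sector reading of $[H,Q_\pm]$ using the decomposition \eqref{eq:density-decomposition}. For ``$\Leftarrow$'', note that $\sum_j(t_j-t_{j+1})=0$ on the periodic chain, so $\sum_j\widehat h_{j,j+1}=H$. Then \eqref{eq:uniform-sufficiency}, or \eqref{eq:staggered-sufficiency} for even $L$, applied to the densities $\widehat h_{j,j+1}$ gives $[H,Q_\pm]=0$.

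Before the forward direction, I check the stated correction algebraically. Inserting $t_j=\tfrac12(\beta_j-\alpha_j)+\rho_jI$ into \eqref{eq:density-decomposition} gives
\begin{equation*}
h_{j,j+1}+t_j-t_{j+1}=\bigl(c_j-\rho_{j+1}+\rho_j\bigr)I+\kappa_{j,j+1}+\tfrac12(\alpha_j+\beta_j)+\tfrac12(\alpha_{j+1}+\beta_{j+1}),
\end{equation*}
which is exactly \eqref{eq:canonical-representative}. The recursion for $\rho$ closes periodically because $\sum_j(c_j-\bar c)=0$. For Hermiticity: partial traces of Hermitian operators are Hermitian and the $c_j$ are real. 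Hence $\alpha_j,\beta_j,\kappa_{j,j+1},\rho_j$ are Hermitian or real, and so are $t_j$ and $\widehat h_{j,j+1}$.

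For ``$\Rightarrow$'' in the uniform case, I write $H$ in the form \eqref{eq:canonical-global-density}. Operators on disjoint supports commute, so
\begin{equation*}
[H,Q_+]=\sum_{j=1}^L[v_j,w_j]+\sum_{j=1}^L[\kappa_{j,j+1},w_j+w_{j+1}].
\end{equation*}
The key step is to show that each summand stays in its own sector of the operator basis described before the theorem.
\begin{itemize}
\item The term $[v_j,w_j]$ is a traceless one-site operator on site $j$.
\item The term $X_j:=[\kappa_{j,j+1},w_j+w_{j+1}]$ has vanishing partial traces on both of its sites. For example, $\operatorname{tr}_{j+1}[\kappa,w_j]=[\operatorname{tr}_{j+1}\kappa,w_j]=0$. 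Also $\operatorname{tr}_{j}[\kappa,w_j]=0$ by cyclicity of the trace over the factor carrying $w_j$. The same two arguments handle the $w_{j+1}$ piece.
\end{itemize}
So $X_j$ lies in the span of $F^\mu_jF^\nu_{j+1}$.

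For $L\geq3$ the bonds are distinct two-site sets, so these sectors are linearly independent on the full chain. Vanishing of $[H,Q_+]$ therefore forces $[v_j,w_j]=0$ and $[\kappa_{j,j+1},w_j+w_{j+1}]=0$ for every $j$. Then each piece of \eqref{eq:canonical-representative} commutes with $w_j+w_{j+1}$: the identity trivially, $\kappa$ by the second condition, and $v_j$ and $v_{j+1}$ because each commutes with its own $w$ and with the other site's $w$ by disjointness.

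The staggered case follows the same route. For even $L$ the signs $(-1)^j$ are consistent around the ring, including the closing bond $(L,1)$ where $(-1)^L=+1=-(-1)^1$. This gives
\begin{equation*}
[H,Q_-]=\sum_j(-1)^j[v_j,w_j]+\sum_j(-1)^j[\kappa_{j,j+1},w_j-w_{j+1}].
\end{equation*}
The same sector argument yields $[v_j,w_j]=0$ and $[\kappa_{j,j+1},w_j-w_{j+1}]=0$, and hence $[\widehat h_{j,j+1},w_j-w_{j+1}]=0$. The requirement $L\geq4$ (as opposed to $L=2$) is exactly what keeps the two bonds at a site on distinct supports, so that the independence argument applies.

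I expect the main obstacle to be the sector-preservation step for $[\kappa_{j,j+1},w_j\pm w_{j+1}]$. One must verify carefully that commuting with a one-site operator cannot produce one-site or identity components. Together with the global linear independence of the bond sectors when bonds share a site, this is what lets a single global identity split into local ones; everything else is bookkeeping.
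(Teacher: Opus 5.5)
Your proposal is correct and follows essentially the same route as the paper's proof. Both split $H$ using the partial-trace decomposition and note that each $[v_j,w_j]$ is a traceless one-site term while each $[\kappa_{j,j+1},w_j\pm w_{j+1}]$ has vanishing partial traces. Both then use support independence for $L\geq3$ to localize the global identity, and check the telescoping correction $t_j$ by direct subtraction. Your explicit algebra for $\widehat h_{j,j+1}$ and your sign bookkeeping on the closing bond are more detailed versions of steps the paper states briefly.
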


\begin{proof}
First consider the uniform additive charge. Using \eqref{eq:canonical-global-density}, its commutator with $H$ is
\begin{equation*}
 0=[H,Q_+]=\sum_{j=1}^{L}[v_j,w_j]
 +\sum_{j=1}^{L}[\kappa_{j,j+1},w_j+w_{j+1}].
\end{equation*}
The first sum contains traceless one-site operators. Every term in the second sum has vanishing partial traces on both sites. Indeed, taking a partial trace over the site on which $w$ acts gives zero by cyclicity; taking it over the other site gives a commutator with the corresponding partial trace of $\kappa$, which vanishes. The linear independence established above therefore implies, for every $j$,
\begin{equation}
 [v_j,w_j]=0,\qquad
 [\kappa_{j,j+1},w_j+w_{j+1}]=0.
 \label{eq:uniform-canonical-conditions}
\end{equation}
Consequently, $\widehat h$ in \eqref{eq:canonical-representative} commutes with the bond sum. For every bond, direct subtraction shows that $\widehat h_{j,j+1}-h_{j,j+1}=t_j-t_{j+1}$ with the stated choice of $t_j$, so its periodic sum is exactly $H$.

For the staggered additive charge on an even ring, the same decomposition gives
\begin{equation*}
 0=[H,Q_-]=\sum_{j=1}^{L}(-1)^j[v_j,w_j]
 +\sum_{j=1}^{L}(-1)^j[\kappa_{j,j+1},w_j-w_{j+1}].
\end{equation*}
The signs are compatible with the closing bond because $L$ is even. The one-site terms are again traceless, and the bond terms again have vanishing partial traces. Independence applies equally to this signed sum, since every coefficient $(-1)^j$ is nonzero. Thus
\begin{equation}
 [v_j,w_j]=0,\qquad
 [\kappa_{j,j+1},w_j-w_{j+1}]=0.
 \label{eq:staggered-canonical-conditions}
\end{equation}
The representative \eqref{eq:canonical-representative} now commutes with the bond difference operator: its interaction part does so by \eqref{eq:staggered-canonical-conditions}, while each of its one-site terms commutes separately with $w$. This establishes necessity in both cases. Conversely, a telescoping change leaves $H$ unchanged, and the local relations imply global conservation by \eqref{eq:uniform-sufficiency} and \eqref{eq:staggered-sufficiency}. Finally, the partial-trace formulas preserve Hermiticity, proving the last assertion.
\end{proof}
This theorem is valid for even periodic chains in the staggered case. Odd periodic staggered chains retain the boundary obstruction discussed in Section~\ref{sec:local-scalene} and are outside the staggered equivalence proved here.

Theorem~\ref{thm:global-local} is closely related to
Lemma~II.2 of Moudgalya and Motrunich~\cite{MoudgalyaMotrunich2023},
whose proof in Appendix~B.1 uses operator expansions with
distinct supports to show that a local Hamiltonian invariant
under an on-site unitary symmetry admits a decomposition into
individually symmetric local terms. For Hermitian $w$, both
additive charges considered here generate on-site unitary
symmetries, placing our result within that framework.
Our formulation treats bond-dependent nearest-neighbour
densities explicitly, identifies their telescoping freedom,
and provides a canonical locally commuting representative.
The direct commutator proof also applies without assuming
that $w$ is Hermitian. This formulation supplies the
global-to-local correspondence needed for the exhaustive
scalene realizations established in Section~\ref{sec:local-scalene}.

The proof of Theorem~\ref{thm:global-local} identifies how global conservation constrains the Hamiltonian density. The interaction terms given by the two-site operators $\kappa$ must already lie in the relevant bond commutant, whereas the one-site contributions from adjacent bonds need only have a sum $v_j=\alpha_j+\beta_j$ that commutes with $w_j$. Redistributing each $v_j$ equally between its two adjacent bonds produces locally commuting densities. In particular, the theorem does not assert that the original Hamiltonian density must commute with the bond generator. For example, a density $h_{j,j+1}=\alpha_j-\alpha_{j+1}$ has zero periodic sum for any $\alpha$, although its local commutator can be nonzero.

The canonical map $h\mapsto\widehat h$ is linear. Thus families of globally symmetric densities can be put into local bond-commuting form with the same coupling dependence, after which the construction in Section~\ref{sec:local-scalene} supplies a fixed pair $A,C$ for the entire family.

The quantities $\bar c$, $\{v_j\}$ and $\{\kappa_{j,j+1}\}$ are invariant under telescoping changes and uniquely determine $H$. By support independence, two collections of nearest-neighbour densities give the same periodic Hamiltonian at fixed $L\geq3$ if and only if these quantities agree. Equivalently, their difference is $t_j-t_{j+1}$ for a periodic collection of one-site operators. Thus \eqref{eq:canonical-representative} is the unique representative with a common scalar part and with each physical one-site term shared equally between its adjacent bonds. The symmetry classes are characterized by arbitrary $\bar c$, traceless $v_j$ satisfying $[v_j,w_j]=0$, and zero-partial-trace interactions $\kappa_{j,j+1}$ satisfying the appropriate local relation. These are independent linear conditions at each site and bond in any finite local dimension. In particular, the interaction operators $\kappa_{j,j+1}$ may differ in their operator structure from bond to bond. Such differences survive the telescoping quotient, since a telescoping term changes only one-site and scalar contributions.

Our main result can now be summarized as follows.
By combining Theorem~\ref{thm:global-local} with Theorems~\ref{thm:sum-bond} and \ref{thm:difference-bond}, we see that every Hamiltonian in the specified additive-symmetry classes admits a scalene realization. The canonical density $\widehat h$ in \eqref{eq:canonical-representative} differs from the original density by a telescoping term and determines the regular member of the triple through
\begin{equation*}
 \check B_{j,j+1}(u)=I+u\widehat h_{j,j+1},\qquad
 B_{j,j+1}(u)=P_{j,j+1}\check B_{j,j+1}(u).
\end{equation*}
The corresponding $A$ and $C$ operators are supplied by Theorem~\ref{thm:sum-bond} or Theorem~\ref{thm:difference-bond}, according to the symmetry. The resulting triple satisfies the scalene equation \eqref{eq:intro-scalene} for every $u$. The regular member generates the original Hamiltonian, since the density change telescopes, and the non-regular monodromy contains its additive charge as established in Section~\ref{sec:local-scalene}. 


\section{Examples}
\label{sec:examples}
Throughout this section the scalar coefficients in each displayed density may be chosen independently on every bond. However, we suppress their bond labels for readability. The examples are constructed by the following steps:
\begin{enumerate}
    \item Choose a one-site operator $w$. For the examples considered here we choose this to be the third component of a spin $J$ system, namely $w=S^z$.
    \item We then construct the commutant of the sum- and difference-bond operators constructed using the chosen $w$. In our case this is done using the theory of quantum angular momentum.
    \item The commutant completely specifies the Hamiltonian density up to the telescoping term and thus also fixes the total Hamiltonian on the closed chain.
    \item The regular member of the triple, $B$ is then obtained from this Hamiltonian density. The non-regular members, $A$ and $C$, are specified by Theorems \ref{thm:sum-bond} and \ref{thm:difference-bond} for the two global additive symmetries, respectively.
\end{enumerate}


The sum-bond commutant can be determined using the addition of angular momenta. We take each local Hilbert space $\mathcal H_J$ to carry the spin-$J$ irreducible representation, with $J=\tfrac12,1,\tfrac32,\ldots$ and dimension $d=2J+1$. Setting $\hbar=1$, define the total angular momentum on a bond by
\begin{equation*}
 \mathbf S_{\mathrm{tot}}=\mathbf S_j+\mathbf S_{j+1},
 \qquad S_{\mathrm{tot}}^z=S_j^z+S_{j+1}^z.
\end{equation*}
Adding the two spins gives the decomposition
\begin{equation*}
 \mathcal H_J\otimes\mathcal H_J
 =\bigoplus_{\ell=0}^{2J}\mathcal H_\ell,
\end{equation*}
in which each total-spin representation $\ell$ occurs once. The coupled basis states $|\ell,M\rangle$ satisfy
\begin{equation*}
 \mathbf S_{\mathrm{tot}}^2|\ell,M\rangle
 =\ell(\ell+1)|\ell,M\rangle,
 \qquad S_{\mathrm{tot}}^z|\ell,M\rangle=M|\ell,M\rangle,
\end{equation*}
where $M=-\ell,-\ell+1,\ldots,\ell$. The bond sum is the third component of total angular momentum, so its distinct eigenvalues are
\begin{equation*}
 M=-2J,-2J+1,\ldots,2J.
\end{equation*}
These values are integers even when $J$ is half-integer, since the two spins being added are equal.

For a fixed $M$, the state $|\ell,M\rangle$ occurs for each $\ell=|M|,|M|+1,\ldots,2J$. The corresponding eigenspace and its dimension are therefore
\begin{equation*}
 \mathcal K_M=\operatorname{span}\{|\ell,M\rangle:\ell=|M|,\ldots,2J\},
 \qquad n_M=\dim\mathcal K_M=2J+1-|M|.
\end{equation*}
Thus, in order of increasing $M$, the eigenvalue multiplicities are $1,2,\ldots,2J,2J+1,2J,\ldots,2,1$. Their sum accounts for the full two-spin Hilbert space:
\begin{equation}
 \sum_{M=-2J}^{2J}n_M
 =(2J+1)+2\sum_{r=1}^{2J}r
 =(2J+1)^2.
 \label{eq:angular-hilbert-dimension}
\end{equation}

To determine the commutant, consider the matrix elements of the braided operator $\check B_{j,j+1}$ in this coupled basis. They obey
\begin{equation*}
 \langle\ell,M|[S_{\mathrm{tot}}^z,\check B_{j,j+1}]|\ell',M'\rangle
 =(M-M')\langle\ell,M|\check B_{j,j+1}|\ell',M'\rangle.
\end{equation*}
Consequently, $\check B_{j,j+1}$ commutes with the bond sum precisely when it preserves $M$. Within each eigenspace $\mathcal K_M$, all matrix elements are allowed. The general member of the commutant therefore has the form
\begin{equation*}
 \check B_{j,j+1}
 =\sum_{M=-2J}^{2J}\sum_{\ell,\ell'=|M|}^{2J}
 b^{(M)}_{\ell\ell'}|\ell,M\rangle\langle\ell',M|,
\end{equation*}
where $b^{(M)}_{\ell\ell'}$ are arbitrary complex coefficients. Equivalently,
\begin{equation}
 \mathfrak C_+(S^z)
 =\bigoplus_{M=-2J}^{2J}\operatorname{End}(\mathcal K_M)
 \simeq\bigoplus_{M=-2J}^{2J}
 \operatorname{Mat}_{\,2J+1-|M|}(\mathbb C).
 \label{eq:angular-commutant-blocks}
\end{equation}
Each eigenspace of dimension $n_M$ contributes $n_M^2$ independent operators. Hence the dimension of the commutant is the sum of the squares of the eigenvalue multiplicities:
\begin{align}
 \dim_{\mathbb C}\mathfrak C_+(S^z)
 &=\sum_{M=-2J}^{2J}(2J+1-|M|)^2\notag\\
 &=(2J+1)^2+2\sum_{r=1}^{2J}r^2\notag\\
 &=\frac{(2J+1)(8J^2+8J+3)}{3}
 =\frac{d(2d^2+1)}{3},\qquad d=2J+1.
 \label{eq:angular-commutant-dimension}
\end{align}
It is important to check that this dimension is an integer to ensure the validity of this result. This is indeed true as seen from the following argument. The dimension $d$ can take one of three values $3k$, $3k+1$ or $3k+2$, for any positive integer $k$. In each of these cases $\dim_{\mathbb C}\mathfrak C_+(S^z)$ becomes $k(18k^2+1)$, $(3k+1)(6k^2+4k+1)$ and $(3k+2)(6k^2+8k+3)$, respectively. As each of them is a positive integer we have our desired result.

The distinction between \eqref{eq:angular-hilbert-dimension} and \eqref{eq:angular-commutant-dimension} is that the former counts states, while the latter counts independent operators acting within the fixed-magnetization eigenspaces.

Conservation of total magnetization allows transitions between different total-spin multiplets: $|\ell,M\rangle\langle\ell',M|$ belongs to the commutant even when $\ell\ne\ell'$. Commutation with $\mathbf S_{\mathrm{tot}}^2$, or invariance under all total-spin components, would impose additional restrictions on such transitions. The algebra \eqref{eq:angular-commutant-blocks} retains every two-site operator that conserves $S_j^z+S_{j+1}^z$, as required for the sum-bond scalene construction.

The difference bond admits a parallel analysis by rotating the second spin. Let $g=\exp(-i\pi S^x)$ be a rotation by $\pi$ about the $x$-axis in the spin-$J$ representation. Since $gS^zg^\dagger=-S^z$, its action on the second site gives
\begin{equation}
 S_j^z-S_{j+1}^z
 =g_{j+1}(S_j^z+S_{j+1}^z)g_{j+1}^\dagger.
 \label{eq:angular-difference-rotation}
\end{equation}
Thus the two bond generators are unitarily equivalent for every spin $J$. To relate this to addition of angular momenta, the rotated second spin has components
\[
 \widetilde{\mathbf S}_{j+1}
 =g_{j+1}\mathbf S_{j+1}g_{j+1}^\dagger
 =(S_{j+1}^x,-S_{j+1}^y,-S_{j+1}^z).
\]
These satisfy the usual angular-momentum commutation relations and still carry spin $J$. The difference bond is the third component of $\mathbf S_j+\widetilde{\mathbf S}_{j+1}$. Accordingly, define the rotated coupled states
\begin{equation}
 |\ell,M\rangle_-=g_{j+1}|\ell,M\rangle,
 \qquad
 (S_j^z-S_{j+1}^z)|\ell,M\rangle_-=M|\ell,M\rangle_-.
 \label{eq:angular-difference-states}
\end{equation}
The label $\ell$ now refers to the total spin formed with the rotated second spin, rather than to the original $\mathbf S_{\mathrm{tot}}^2$.

The eigenvalues remain $M=-2J,-2J+1,\ldots,2J$. For each $M$, the eigenspace is
\begin{equation*}
 \mathcal K_M^-=g_{j+1}\mathcal K_M
 =\operatorname{span}\{|\ell,M\rangle_-:\ell=|M|,\ldots,2J\},
 \qquad \dim\mathcal K_M^-=2J+1-|M|.
\end{equation*}
One can also count this multiplicity directly in the product basis: the difference bond assigns eigenvalue $m_1-m_2$ to $|m_1,m_2\rangle$, and exactly $2J+1-|M|$ pairs obey $m_1-m_2=M$. As for the sum bond, the multiplicities add to $(2J+1)^2$.

The commutant consists of arbitrary operators within these fixed-difference eigenspaces. In the rotated coupled basis, its general member is
\begin{equation*}
 \check B_{j,j+1}
 =\sum_{M=-2J}^{2J}\sum_{\ell,\ell'=|M|}^{2J}
 b^{(M,-)}_{\ell\ell'}|\ell,M\rangle_-\,{}_-\langle\ell',M|,
\end{equation*}
with unrestricted complex coefficients. Hence
\begin{align*}
 \mathfrak C_-(S^z)
 &=g_{j+1}\mathfrak C_+(S^z)g_{j+1}^\dagger
 =\bigoplus_{M=-2J}^{2J}\operatorname{End}(\mathcal K_M^-),\\
 \dim_{\mathbb C}\mathfrak C_-(S^z)
 &=\sum_{M=-2J}^{2J}(2J+1-|M|)^2
 =\frac{d(2d^2+1)}{3},\qquad d=2J+1.
\end{align*}
The two commutants therefore have the same block sizes and dimension, but preserve different combinations of the physical magnetizations. These are local two-site statements and do not depend on chain length. When difference-commuting densities are assembled into an untwisted periodic Hamiltonian, the even-length condition for the staggered global charge remains as established in Section~\ref{sec:local-scalene}.

\subsection*{\texorpdfstring{Spin-$\tfrac12$}{Spin-1/2}}

We now give the most general densities in the coupled angular-momentum basis, allowing arbitrary complex coefficients. This use of an $SU(2)$ coupled basis does not impose $SU(2)$ invariance on the Hamiltonian. For two spins $\tfrac12$, the coupled states are
\begin{equation*}
 |t_+\rangle=|\uparrow\uparrow\rangle,\quad
 |t_0\rangle=\frac{|\uparrow\downarrow\rangle+|\downarrow\uparrow\rangle}{\sqrt2},\quad
 |s\rangle=\frac{|\uparrow\downarrow\rangle-|\downarrow\uparrow\rangle}{\sqrt2},\quad
 |t_-\rangle=|\downarrow\downarrow\rangle.
\end{equation*}
The three $t$ states have total spin one and $|s\rangle$ has total spin zero. From the general spin $J$ analysis we see that the most general sum-commuting density is
\begin{equation}
 h^{(+)}=\epsilon_+|t_+\rangle\langle t_+|
 +\epsilon_-|t_-\rangle\langle t_-|
 +u|t_0\rangle\langle t_0|+v|s\rangle\langle s|
 +\xi|t_0\rangle\langle s|+\zeta|s\rangle\langle t_0|.
 \label{eq:half-general-coupled-density}
\end{equation}
All six coefficients are independent complex numbers. The density is Hermitian precisely when $\epsilon_\pm,u,v$ are real and $\zeta=\xi^*$. In particular, the terms proportional to $\xi$ and $\zeta$ mix the singlet and triplet while preserving their common magnetization $M=0$.

Suppressing the site labels on the two-site matrices, define
\[
 p=\frac{u+v+\xi+\zeta}{2},\quad
 q=\frac{u+v-\xi-\zeta}{2},\quad
 f=\frac{u-v-\xi+\zeta}{2},\quad
 z=\frac{u-v+\xi-\zeta}{2}.
\]
In the product basis $(|\uparrow\uparrow\rangle,|\uparrow\downarrow\rangle,
 |\downarrow\uparrow\rangle,|\downarrow\downarrow\rangle)$, the two most general densities are
\begin{equation*}
 h^{(+)}=
 \begin{pmatrix}
 \epsilon_+&0&0&0\\
 0&p&f&0\\
 0&z&q&0\\
 0&0&0&\epsilon_-
 \end{pmatrix},\qquad
 h^{(-)}=
 \begin{pmatrix}
 p&0&0&f\\
 0&\epsilon_+&0&0\\
 0&0&\epsilon_-&0\\
 z&0&0&q
 \end{pmatrix}.
\end{equation*}
The second follows by flipping the second spin, $h^{(-)}=\sigma_{j+1}^x h^{(+)}\sigma_{j+1}^x$, and exhausts the difference commutant. Equivalently, its coupled-basis expression is \eqref{eq:half-general-coupled-density} with every ket replaced by its rotated counterpart from \eqref{eq:angular-difference-states}. A common phase from the rotation cancels in each outer product. The parameters can be chosen independently for the sum and difference families; the displayed correspondence identifies one density in each. In matrix notation, Hermiticity requires $p,q,\epsilon_\pm\in\mathbb R$ and $z=f^*$. The six complex entries give the dimension of each commutant directly.

The corresponding Hamiltonians are $H_\pm=\sum_jh^{(\pm)}_{j,j+1}$. To realize them as scalene triples, set $S^z=\sigma^z/2$, $s=S^+=(\sigma^x+i\sigma^y)/2$ and $r=\sigma^z$. For any fixed $\lambda\ne0$, the triples are
\begin{align*}
 B^{(\pm)}_{13}(u)&=P_{13}(I+u h^{(\pm)}_{13}),\\
 A^{(+)}_{12}&=I+\lambda S_1^zS_2^+,& C^{(+)}_{23}&=I-\lambda S_2^+S_3^z,\\
 A^{(-)}_{12}&=\sigma_2^z+\lambda S_1^zS_2^+,& C^{(-)}_{23}&=\sigma_2^z+\lambda S_2^+S_3^z.
\end{align*}
These triples satisfy the scalene equation for arbitrary complex density parameters and any $\lambda\ne0$. The alternative $s=S^-$ with the same $r=\sigma^z$ also works with the same $B$, illustrating the non-uniqueness of the triples discussed in Section~\ref{sec:local-scalene}. Their regular members generate $H_\pm$, while their non-regular monodromies contain $Q_\pm$, with even length required for the staggered periodic chain.

For an explicit spatially nonuniform spin-$\tfrac12$ example, partition the bonds into three nonempty sets $\mathcal I_1,\mathcal I_2,\mathcal I_3$, with no imposed spatial pattern, and take
\begin{align*}
 H_+^{(1/2)}={}&\sum_{j\in\mathcal I_1}J_j
 (S_j^+S_{j+1}^-+S_j^-S_{j+1}^+)
 +\sum_{j\in\mathcal I_2}\Delta_j S_j^zS_{j+1}^z\\
 &+i\sum_{j\in\mathcal I_3}D_j
 (S_j^+S_{j+1}^- - S_j^-S_{j+1}^+).
\end{align*}
The three bond types are respectively exchange, longitudinal interaction and antisymmetric exchange. They are linearly independent and have vanishing partial traces, so their distinction cannot be removed by scalar rescaling or telescoping terms. Every bond commutes with $S_j^z+S_{j+1}^z$, and hence $[H_+^{(1/2)},Q_+]=0$. Real $J_j,\Delta_j,D_j$ give a Hermitian example; complex values are also allowed. On an even ring, a corresponding staggered example is
\begin{align*}
 H_-^{(1/2)}={}&\sum_{j\in\mathcal I_1}J_j
 (S_j^+S_{j+1}^++S_j^-S_{j+1}^-)
 -\sum_{j\in\mathcal I_2}\Delta_j S_j^zS_{j+1}^z\\
 &+i\sum_{j\in\mathcal I_3}D_j
 (S_j^+S_{j+1}^+-S_j^-S_{j+1}^-).
\end{align*}
Each density is obtained by conjugating its sum-bond counterpart with $\sigma_{j+1}^x$, so it commutes with $S_j^z-S_{j+1}^z$. Both examples use the same respective $A,C$ displayed above, with $\check B_{j,j+1}(u)=I+u h_{j,j+1}^{(\pm)}$ independently on each bond and the monodromy assignment in \eqref{eq:inhomogeneous-derivative}.

\subsection*{Spin-1}

For two spin-1 sites, the coupled representations have total spin $\ell=2,1,0$. We denote total-angular-momentum eigenstates by $|\ell,M\rangle_{\mathrm c}$ and uncoupled product states by $|m_1,m_2\rangle_{\mathrm p}$. In the latter notation, $m_1$ and $m_2$ are the eigenvalues of $S_j^z$ and $S_{j+1}^z$, respectively, with both one-site spins fixed at $J=1$. Choosing standard relative phases, the change of basis is
\begin{align}
 |2,\pm2\rangle_{\mathrm c}&=|\pm1,\pm1\rangle_{\mathrm p},\notag\\
 |2,\pm1\rangle_{\mathrm c}&=\frac{|\pm1,0\rangle_{\mathrm p}+|0,\pm1\rangle_{\mathrm p}}{\sqrt2},\notag\\
 |1,1\rangle_{\mathrm c}&=\frac{|1,0\rangle_{\mathrm p}-|0,1\rangle_{\mathrm p}}{\sqrt2},\qquad
 |1,-1\rangle_{\mathrm c}=\frac{|0,-1\rangle_{\mathrm p}-|-1,0\rangle_{\mathrm p}}{\sqrt2},\notag\\
 |2,0\rangle_{\mathrm c}&=\frac{|1,-1\rangle_{\mathrm p}+2|0,0\rangle_{\mathrm p}+|-1,1\rangle_{\mathrm p}}{\sqrt6},\notag\\
 |1,0\rangle_{\mathrm c}&=\frac{|1,-1\rangle_{\mathrm p}-|-1,1\rangle_{\mathrm p}}{\sqrt2},\qquad
 |0,0\rangle_{\mathrm c}=\frac{|1,-1\rangle_{\mathrm p}-|0,0\rangle_{\mathrm p}+|-1,1\rangle_{\mathrm p}}{\sqrt3}.
 \label{eq:one-coupled-states-explicit}
\end{align}
Thus $|0,0\rangle_{\mathrm c}$ is the total-spin singlet, whereas $|0,0\rangle_{\mathrm p}$ has zero magnetization on each site. The subscripts $\mathrm c$ and $\mathrm p$ distinguish the two bases throughout this example.

Order the coupled basis by magnetization as
\begin{equation}
 \bigl(|2,2\rangle_{\mathrm c}; |2,1\rangle_{\mathrm c},|1,1\rangle_{\mathrm c};
 |2,0\rangle_{\mathrm c},|1,0\rangle_{\mathrm c},|0,0\rangle_{\mathrm c};
 |2,-1\rangle_{\mathrm c},|1,-1\rangle_{\mathrm c}; |2,-2\rangle_{\mathrm c}\bigr).
 \label{eq:one-ordered-coupled-basis}
\end{equation}
The most general sum-commuting density has the explicit $9\times9$ block form
\begin{equation}
 h^{(+)}=\operatorname{diag}(\epsilon_2,K_1,K_0,K_{-1},\epsilon_{-2}),
 \quad
 K_{\pm1}=\begin{pmatrix}
 \gamma^{(\pm1)}_{22}&\gamma^{(\pm1)}_{21}\\
 \gamma^{(\pm1)}_{12}&\gamma^{(\pm1)}_{11}
 \end{pmatrix},
 \quad
 K_0=\begin{pmatrix}
 \gamma^{(0)}_{22}&\gamma^{(0)}_{21}&\gamma^{(0)}_{20}\\
 \gamma^{(0)}_{12}&\gamma^{(0)}_{11}&\gamma^{(0)}_{10}\\
 \gamma^{(0)}_{02}&\gamma^{(0)}_{01}&\gamma^{(0)}_{00}
 \end{pmatrix}.
 \label{eq:one-general-block-density}
\end{equation}
Equivalently, as an operator in the angular-momentum basis,
\begin{equation*}
 h^{(+)}=\epsilon_2|2,2\rangle_{\mathrm c}\,{}_{\mathrm c}\langle2,2|
 +\epsilon_{-2}|2,-2\rangle_{\mathrm c}\,{}_{\mathrm c}\langle2,-2|
 +\sum_{M=-1}^{1}\sum_{\ell,\ell'=|M|}^{2}
 \gamma^{(M)}_{\ell\ell'}|\ell,M\rangle_{\mathrm c}\,{}_{\mathrm c}\langle\ell',M|.
\end{equation*}
At $M=0$, the state with $\ell=0$ is the coupled singlet in \eqref{eq:one-coupled-states-explicit}. The two $\epsilon$ coefficients and all entries of the three $K$ blocks are arbitrary complex numbers, giving $2+4+9+4=19$ parameters. Hermiticity holds precisely when $\epsilon_{\pm2}$ are real and $K_M=K_M^\dagger$, or equivalently $\gamma^{(M)}_{\ell\ell'}=(\gamma^{(M)}_{\ell'\ell})^*$. This locus has nineteen real parameters.

For the difference bond, define the spin-flip operator
\begin{equation*}
 F=|1\rangle\langle-1|+|0\rangle\langle0|+|-1\rangle\langle1|,
 \qquad F S^z F=-S^z,\qquad F^2=I.
\end{equation*}
For spin one, $F=-g$ with $g$ from \eqref{eq:angular-difference-rotation}. Thus the most general difference-commuting density is
\begin{equation*}
 h^{(-)}=F_{j+1}h^{(+)}F_{j+1}.
\end{equation*}
Its $9\times9$ matrix is exactly \eqref{eq:one-general-block-density} in the basis obtained by applying $F_{j+1}$ to each state in \eqref{eq:one-ordered-coupled-basis}. Explicitly, this replaces every product ket $|m_1,m_2\rangle_{\mathrm p}$ in \eqref{eq:one-coupled-states-explicit} by $|m_1,-m_2\rangle_{\mathrm p}$. This specifies all matrix entries without expanding a dense change-of-basis expression. The blocks now preserve magnetization difference, and their Hermiticity conditions are unchanged. As before, the two families allow independent choices of all coefficients.

Finally, writing $E_{mn}=|m\rangle\langle n|$ for one-site matrix units, choose $s=E_{10}$ and $r=I-2E_{00}$. Since $s^2=0$, $r^2=I$ and $rs=-sr$, for any fixed $\lambda\ne0$ the triples are
\begin{align*}
 B^{(\pm)}_{13}(u)&=P_{13}(I+u h^{(\pm)}_{13}),\\
 A^{(+)}_{12}&=I+\lambda S_1^z(E_{10})_2,& C^{(+)}_{23}&=I-\lambda (E_{10})_2S_3^z,\\
 A^{(-)}_{12}&=I-2(E_{00})_2+\lambda S_1^z(E_{10})_2,&
 C^{(-)}_{23}&=I-2(E_{00})_2+\lambda (E_{10})_2S_3^z.
\end{align*}
An alternative auxiliary realization uses $s=E_{1,-1}$ and $r=I-2E_{-1,-1}$, with the same regular member $B$. The periodic Hamiltonians $H_\pm=\sum_jh^{(\pm)}_{j,j+1}$ and their charges follow from these triples as in Section~\ref{sec:local-scalene}. Together with Theorem~\ref{thm:global-local}, the unrestricted complex families above exhaust the Hamiltonians with the respective additive symmetry modulo local telescoping changes, subject to the stated chain-length conditions.

The spin-1 space allows further changes of interaction type. Using an arbitrary bond partition as above, an explicit example is
\begin{align}
 H_+^{(1)}={}&\sum_{j\in\mathcal I_1}J_j
 (S_j^+S_{j+1}^-+S_j^-S_{j+1}^+)\nonumber\\
 &+\sum_{j\in\mathcal I_2}K_j
 \bigl((S_j^+)^2(S_{j+1}^-)^2+(S_j^-)^2(S_{j+1}^+)^2\bigr)
 +\sum_{j\in\mathcal I_3}D_j(S_j^z)^2(S_{j+1}^z)^2. \nonumber\\
 =&\sum_{j\in\mathcal I_1} J_j\mathcal{M}_j^{(1)}
 +\sum_{j\in\mathcal I_2} K_j\mathcal{M}_j^{(2)}
  +\sum_{j\in\mathcal I_3}D_j\mathcal{V}_j.
  \label{eq:spin1-example-Hamiltonian}
\end{align}
We introduce the bond operators
\begin{align*}
\mathcal{M}_j^{(1)}
&=S_j^+S_{j+1}^-+S_j^-S_{j+1}^+, \quad
\mathcal{M}_j^{(2)}
=(S_j^+)^2(S_{j+1}^-)^2
 +(S_j^-)^2(S_{j+1}^+)^2, \quad
\mathcal{V}_j
=(S_j^z)^2(S_{j+1}^z)^2.
\label{eq:spin1-bond-operators}
\end{align*}
The three bond operators have distinct actions in the two-site product basis. The single-magnetization exchange operator $\mathcal{M}_j^{(1)}$ connects the pairs
$|1,0\rangle_{\mathrm p}\leftrightarrow
|0,1\rangle_{\mathrm p}$ and
$|0,-1\rangle_{\mathrm p}\leftrightarrow
|-1,0\rangle_{\mathrm p}$, as well as
$|1,-1\rangle_{\mathrm p}\leftrightarrow
|0,0\rangle_{\mathrm p}$ and
$|0,0\rangle_{\mathrm p}\leftrightarrow
|-1,1\rangle_{\mathrm p}$.
Each of these transitions has matrix element $2$. In contrast, the double-magnetization exchange operator $\mathcal{M}_j^{(2)}$ acts only between
$|1,-1\rangle_{\mathrm p}$ and $|-1,1\rangle_{\mathrm p}$, with matrix element $4$. The diagonal operator $\mathcal V_j$ leaves all product states unchanged and has eigenvalue $1$ when both local magnetizations are nonzero,
and $0$ otherwise. Each bond operator commutes with the local magnetization sum $(S_j^z+S_{j+1}^z)$, while their two-site interaction components are linearly independent. In the last term the zero-partial-trace component is proportional to $((S_j^z)^2-\tfrac23I_j)((S_{j+1}^z)^2-\tfrac23I_{j+1})$. Thus these distinct bond types remain distinct modulo telescoping terms. 

To make the connection with the general density in \eqref{eq:one-general-block-density} explicit, consider a representative two-site Hamiltonian density
$h_j=J\mathcal{M}_j^{(1)}+K\mathcal{M}_j^{(2)}+D \mathcal{V}_j$. Using the spin-1 ladder-operator matrix elements and the coupled states in \eqref{eq:one-coupled-states-explicit}, we obtain the following $M=0$ block in the ordered basis $(|2,0\rangle_{\mathrm c},
|1,0\rangle_{\mathrm c},
|0,0\rangle_{\mathrm c})$:
\begin{equation*}
K_0=\frac{1}{3}
\begin{pmatrix}
8J+4K+D & 0 & \sqrt2(2J+4K+D)\\
0 & -12K+3D & 0\\
\sqrt2(2J+4K+D) & 0 & -8J+8K+2D
\end{pmatrix}.
\label{eq:spin1-example-K0}
\end{equation*}
The remaining magnetization sectors give
\begin{equation*}
K_{\pm 1}=
\begin{pmatrix}
2J&0\\
0&-2J
\end{pmatrix},
\qquad
\epsilon_{\pm 2}=D.
\label{eq:spin1-example-Kpm}
\end{equation*}
The above matrices provide an explicit three-parameter subfamily of the general $19$-parameter magnetization-conserving density. In particular, the nonzero off-diagonal elements of $K_0$ couple the total-spin
states $|2,0\rangle_{\mathrm c}$ and $|0,0\rangle_{\mathrm c}$, demonstrating that conservation of total magnetization does not necessarily imply conservation of total angular momentum.

Furthermore, the vanishing matrix elements between the $\ell=1$ and $\ell=0,2$ states follow from an additional site-exchange symmetry of the chosen bond operators. Indeed, under the permutation $P_{j,j+1}$ of the two sites,
the coupled spin-1 states satisfy 
\begin{equation*}
P_{j,j+1}|\ell,M\rangle_{\mathrm c}
=(-1)^{2-\ell}|\ell,M\rangle_{\mathrm c}.
\end{equation*}
Thus the $\ell=0,2$ states are symmetric, whereas the $\ell=1$ states are antisymmetric. Since the three bond operators $\mathcal{M}_j^{(1)}$, $\mathcal{M}_j^{(2)}$, and $\mathcal{V}_j$ commute with $P_{j,j+1}$, their matrix elements between states of opposite exchange parity vanish. This explains the decoupling of the $\ell=1$ state in the $M=0$ block, while mixing between the $\ell=0$ and $\ell=2$ states remains allowed. Such an additional restriction is not imposed on the general magnetization-conserving density in \eqref{eq:one-general-block-density}.

For the staggered example, set
\begin{equation*}
 H_-^{(1)}=\sum_j F_{j+1}h_{j,j+1}^{(+)}F_{j+1}.
\end{equation*}
This replaces opposite-sign ladder pairs in the first two sums by equal-sign pairs and leaves the last sum unchanged. It conserves $Q_-$ for even $L$. Real coefficients give Hermitian Hamiltonians in both cases. The spin-1 triples above realize these independently chosen densities through their regular $B$ members, without changing $A,C$ or introducing site-dependent spectral shifts.

\section{Discussion}
\label{sec:discussion-outlook}

The construction presented here works in two complementary directions. Given $w$, its sum- or difference-bond commutant supplies Hamiltonian densities and their scalene realizations, with the freedom in $A$ and $C$ described in Section~\ref{sec:local-scalene}. Conversely, a Hamiltonian with the corresponding global additive symmetry admits a bond-commuting density $\widehat h_{j,j+1}$ after removing a telescoping term. This adjustment redistributes the one-site, magnetic-field-like terms without changing the periodic Hamiltonian as discussed in Section~\ref{sec:necessity}. Both directions allow arbitrary bond dependence: the regular monodromy uses site-dependent $B$ operators with the same $A,C$. This permits different interactions and different charge-block matrices on different bonds, as the explicit examples demonstrate, while retaining one common algebraic realization of the charge. Together, these results exhaust the Hamiltonians under consideration modulo telescoping changes, with even length required in the staggered case. For a fixed $w$, the same $A,C$ also accommodate arbitrary symmetry-preserving nearest-neighbour perturbations after this adjustment of density. Our spin-$\tfrac12$ and spin-1 examples give the full density families and their Hermitian subfamilies for the particular choice $w=S^z$.

Homogeneous choices within the families in Section~4 contain familiar integrable subfamilies: the spin-$\tfrac12$ sum-bond family includes the $XXZ$ chain and its isotropic $XXX$ limit~\cite{Bethe1931,KBI}, while the spin-1 family includes the Takhtajan-Babujian density $h_{j,j+1}=\mathbf S_j\cdot\mathbf S_{j+1}-(\mathbf S_j\cdot\mathbf S_{j+1})^2$, up to normalization and an additive constant~\cite{Takhtajan1982,Babujian1982}. On even periodic chains, alternating rotations by $\pi$ about the $x$-axis map these models to integrable members of the difference-bond families. Their integrability imposes additional structure beyond the additive symmetry required by our scalene construction. Now we discuss two further aspects of the construction.

\begin{enumerate}
\item The scalene solution space realizes a finite-dimensional double-commutant pair~\cite{vonNeumann1930,BratteliRobinson}. For $w=w^\dagger$, let $D_\pm=w_j\pm w_{j+1}$ and denote with $\mathfrak D_\pm=\operatorname{Alg}^\dagger\{I,D_\pm\}$, the unital, adjoint-closed algebra generated by the unit or identity element $I$ and $D_\pm$. The allowed $\check B$ operators form $\mathfrak C_\pm(w)=\mathfrak D_\pm'$, where $\mathfrak D'$ denotes the commutant or centralizer of the algebra $\mathfrak D$. Then we have,
\begin{equation*}
 \mathfrak C_\pm(w)'=\mathfrak D_\pm''=\mathfrak D_\pm.
\end{equation*}
Indeed, $\mathfrak D_\pm'$ contains arbitrary matrices on each eigenspace of $D_\pm$; its commutant is scalar on each eigenspace and hence consists of polynomials in $D_\pm$. Similarly on the chain, the unital, adjoint-closed algebra $\mathfrak A_L$ generated by allowed bond Hamiltonian densities obeys
\begin{equation*}
 (\mathfrak A_L')'=\mathfrak A_L,\qquad Q_\pm\in\mathfrak A_L',
\end{equation*}
with the stated length restrictions. The monodromy supplies $Q_\pm$ explicitly, but need not generate the full commutant. This connects our construction to the bond-algebra approach of Cobanera, Nussinov and Ortiz~\cite{NussinovOrtiz2009, CobaneraOrtizNussinov2010,CobaneraOrtizNussinov2011}, and the commutant framework for symmetries and Hilbert space fragmentation~\cite{MoudgalyaMotrunich2022,MoudgalyaMotrunich2023}.

The above statements find a natural place in the context of algebraic quantum theory~\cite{HaagKastler,BratteliRobinson}, where the bicommutant of a unital, adjoint-closed algebra equals its strong-operator closure. In finite dimension, the algebra is already closed and thus is not so interesting. However, extending our construction to infinite chains requires specifying this closure and can possibly show some interesting features. These become more involved for non-Hermitian $w$, as now adjoint closure also needs attention. In particular, adjoining $D_\pm^\dagger$ can change the commutant. All this will help us formulate the Haag duality for such spin chains in the infinite lattice. For this we  would additionally need to identify a region's commutant with its complementary-region algebra, a locality property beyond the bicommutant identity. Failure of this duality property could indicate signatures of some global properties. These considerations are left for a future work.

\item Apart from the sum- and difference-bond operators considered in this work, we can also consider product bonds of the form $w_jw_{j+1}$. It is easy to see that when the Hamiltonian density commutes with such product-bond operators then the full Hamiltonian $H=\sum\limits_j~h_{j,j+1}$ conserves the global operator $G=\prod_jw_j$. The latter represents an on-site symmetry when $w$ is unitary. Appendix~\ref{app:product-bonds} gives a scalene realization for invertible $w$ and derives $[H,G]=0$. Here the product charge can survive the auxiliary trace. An exhaustive global-to-local correspondence remains to be established.
\end{enumerate}

\section{Outlook}
\label{sec:outlook}
The construction in this work lays the path to classify all nearest-neighbor Hamiltonians with a global additive charge.
This can be done by classifying the one-site generators $w$ and constructing their bond commutants. That would exhaust the nearest-neighbour Hamiltonians, including spatially disordered ones, with nontrivial uniform or staggered additive symmetry, modulo telescoping terms and with the stated periodic length restrictions. Each $w$ determines a local commutant from which the density on every bond can be chosen independently. The resulting classification therefore covers arbitrary spatial arrangements of all allowed interaction types, beyond scalar disorder applied to a fixed spin-chain density. Thus classifying $w$ supplies all these families without a translation-invariance assumption. Furthermore these Hamiltonians and their conserved charges can be obtained from the regular and non-regular members of scalene triples. 

We will take a brief look at the main steps involved in such a classification. Consider the following set of relations
\begin{equation*}
 w\sim\alpha XwX^{-1}+\beta I,
 \qquad \alpha\ne0,\quad\alpha,\beta\in\mathbb C,\quad
 X\in\mathrm{GL}(\mathcal H).
\end{equation*}
The constant shifts and rescalings leave the bond commutants unchanged and so we consider $w$'s up to these equivalences. On the other hand similarity transformations conjugate the bond commutants by $X_jX_{j+1}$. Jordan forms therefore organize the possibilities for $w$. Writing $J_n(\mu)$ for a size-$n$ Jordan block with unit superdiagonal and $[\mu]$ for a one-dimensional block, the complete spin-$\tfrac12$ list is
\begin{equation*}
 0,\qquad\operatorname{diag}(0,1),\qquad J_2(0),
\end{equation*}
and the spin-1 list is
\begin{equation*}
\begin{gathered}
 0,\qquad\operatorname{diag}(0,0,1),\qquad
 \operatorname{diag}(0,1,z),\quad z\ne0,1,\\
 J_2(0)\oplus[0],\qquad J_2(0)\oplus[1],\qquad J_3(0).
\end{gathered}
\end{equation*}
Here $0$ represents the scalar class, which gives only trivial charges. For 
a generator with three distinct eigenvalues, the matrix is diagonalizable to $\operatorname{diag}(\lambda_1, \lambda_2, \lambda_3)$. By shifting the spectrum by $-\lambda_1 I$ and rescaling by a factor of $(\lambda_2 - \lambda_1)^{-1}$, we reduce it to the following canonical form
\begin{equation*}
    w \sim \operatorname{diag}\left(0, 1, \frac{\lambda_3 - \lambda_1}{\lambda_2 - \lambda_1}\right) \equiv \operatorname{diag}(0, 1, z),
\end{equation*}
where $z\in \mathbb{C} \setminus \{0, 1\}$, to ensure the three eigenvalues remain distinct. Geometrically, the parameter $z$ represents a projective cross-ratio of the three eigenvalues together with a distinguished reference point at infinity, which is fixed under affine spectral transformations. Choosing $(x_1,x_2,x_3,x_4)=(\lambda_3,\lambda_2,\lambda_1,\infty)$, the cross-ratio reduces to
\begin{equation*}
(\lambda_3,\lambda_2;\lambda_1,\infty) =\lim_{x_4\to\infty}
\frac{(\lambda_3-\lambda_1)(\lambda_2-x_4)}
     {(\lambda_3-x_4)(\lambda_2-\lambda_1)}
=\frac{\lambda_3-\lambda_1}{\lambda_2-\lambda_1} =z.
\end{equation*}
Since the eigenvalues are unordered, permutations generate the \textit{anharmonic}\footnote{In classical projective geometry, when the cross-ratio equals $-1, 2,$ or $\frac{1}{2}$, the four points form a harmonic bundle (where the 6 group elements collapse to just 3 distinct values). For any generic value of $z$, the six values are distinct, and the point set is termed \textit{anharmonic}.}\cite{coxeter2003projective} equivalences
\begin{equation*}
 z\sim1-z\sim z^{-1}\sim(1-z)^{-1} \sim\frac{z}{z-1}\sim\frac{z-1}{z}.
\end{equation*}
Consequently, two diagonal forms $\operatorname{diag}(0,1,z)$ and $\operatorname{diag}(0,1,z')$ are equivalent precisely when $z$ and $z'$ belong to the same anharmonic orbit.

For Hermitian generators, the classification can be organized
by their eigenvalues. A unitary change of basis diagonalizes
$w$, while replacing $w$ by $\alpha w+\beta I$, with real
$\alpha\ne0$ and $\beta$, leaves both bond commutants unchanged.
We may therefore shift and rescale the eigenvalues to simplify
their description. For a qubit, every non-scalar Hermitian
generator has two distinct eigenvalues and is consequently
equivalent to $S^z$. For spin 1, a Hermitian generator with
three distinct eigenvalues can instead be normalized to
$\operatorname{diag}(0,t,1)$, where $0<t<1$. The parameter $t$
measures the position of the middle eigenvalue between the
smallest and largest. Reversing the ordering identifies $t$
with $1-t$, and the equally spaced spectrum of $S^z$
corresponds to $t=\tfrac12$. Generators with repeated
eigenvalues form separate classes.

Once $w$ is diagonal, with eigenvalues $\omega_m$, the product
state $|m,n\rangle$ has bond eigenvalue
$\omega_m+\omega_n$ in the sum case and $\omega_m-\omega_n$
in the difference case. An operator in the bond commutant
may mix precisely those product states having the same bond
eigenvalue. Thus the commutant is obtained by grouping equal
sums or differences and allowing an arbitrary matrix within
each group. For non-Hermitian generators that are not
diagonalizable, one must instead solve the linear commutation
equations using the Jordan form of the bond operator.

While we systematically generate Hamiltonian families using these methods, there could still be some redundant equivalences among them. Eliminating these redundancies is the final step for such a classification which we leave for the future.


\section*{Acknowledgements}

The authors acknowledge the use of GPT-6 Astra to assist with manuscript preparation and to cross-check mathematical expressions and derivations. All AI-assisted content was reviewed by the authors, who take full responsibility for the final manuscript and the validity of its results.

\appendix
\section{Scalene realization of product-bond commutants}
\label{app:product-bonds}
Let $w$ be an invertible one-site operator and choose a one-site operator $r$ with $r^2=I$; the choice $r=I$ is sufficient. Define
\begin{equation*}
 A_{12}=w_1^{-1}r_2,\qquad
 B_{13}(u)=P_{13}\check B_{13}(u),\qquad
 C_{23}=r_2w_3.
\end{equation*}
Then the scalene equation is equivalent to the product-bond commutant condition,
\begin{equation*}
 A_{12}B_{13}(u)C_{23}=C_{23}B_{13}(u)A_{12}
 \quad\Longleftrightarrow\quad
 [\check B_{13}(u),w_1w_3]=0.
\end{equation*}
Indeed, moving $P_{13}$ to the left and using $r_2^2=I$ gives
\begin{align*}
 A_{12}B_{13}C_{23}-C_{23}B_{13}A_{12}
 &=P_{13}\bigl(w_3^{-1}\check B_{13}w_3
                  -w_1\check B_{13}w_1^{-1}\bigr)\\
 &=P_{13}w_3^{-1}[\check B_{13},w_1w_3]w_1^{-1}.
\end{align*}
Invertibility of the outside factors proves the equivalence. Any density satisfying $[h_{j,j+1},w_jw_{j+1}]=0$ admits the regular choice $\check B_{j,j+1}(u)=I+u h_{j,j+1}$. The densities and regular families may be chosen independently on each bond, using the assignment in \eqref{eq:inhomogeneous-derivative}. This factorized realization needs no additional algebraic relation on $w$, except invertibility. Independent nonzero overall scalar factors multiplying $A$ and $C$ also preserve the equation.

With the auxiliary labels of Section~\ref{sec:local-scalene}, its monodromy and transfer matrix are
\begin{equation*}
 T_b^{(C)}=r_b^L G,\qquad
 G=\prod_{j=1}^Lw_j,\qquad
 \tau_C=\operatorname{tr}_b(r_b^L)G.
\end{equation*}
In particular, $r=I$ gives $\tau_C=d_bG$ for every $L$. The mixed $RTT$ relation reduces to
\begin{equation*}
 w_a^{-1}T_a^{(B)}(u)G=G T_a^{(B)}(u)w_a^{-1},
 \qquad [T_a^{(B)}(u),w_aG]=0.
\end{equation*}
Conjugating by $w_a$ and taking the auxiliary trace yields $[\tau_B(u),G]=0$. Regularity therefore gives $[H,G]=0$ for $H=\tau_B(0)^{-1}\tau_B'(0)$. Equivalently, each local commutator vanishes directly because all other factors of $G$ act outside its bond. Unlike the additive charge cases, there are no restrictions on the length of the lattice here. For unitary $w$, this is the symmetry $GHG^{-1}=H$; for general invertible $w$, it is an algebraic conservation law. This appendix establishes the local scalene equivalence and its global consequence, without assuming ordinary Yang-Baxter integrability. Singular $w$ and the necessity of a local product-bond relation for a prescribed global product symmetry, require separate treatment and are left open.

\end{document}